\def\calT{{\cal T}}
\def\calD{{\nabla}}
\def\disD{{\cal D}}
\def\disQ{{\cal Q}}
\def\calA{{\cal A}}
\def\dgxt{{D_{\Gamma_{X(t)}}}}
\def\gxt{{\Gamma_{X(t)}}}
\def\bs{\hfill $\blacksquare$}
\newcommand{\irg}[3]{\ #1 \in \mathbb{Z} \cap [#2,#3]}
\newcommand{\CO}[2]{C^{(1)}_{#2,#1}}
\newcommand{\CT}[2]{C^{(2)}_{#2,#1}}
\newcommand\pFq[3]{{}_{#1}F_{#2}\left(#3\right){}}
\DeclarePairedDelimiterX\MeijerM[3]{\lparen}{\rparen}%
{\begin{smallmatrix}#1 \\ #2\end{smallmatrix}\delimsize\vert\,#3}
\begin{document}

\begin{frontmatter}

\title{An Expanded Local Variance Gamma model}

\author{P.~Carr}
\ead{petercarr@nyu.edu}

\author{A.~Itkin\corref{cor1}}
\ead{aitkin@nyu.edu}

\address{Tandon School of Engineering, New York University, \\
12 Metro Tech Center, RH 517E, Brooklyn NY 11201, USA}

\begin{abstract}
The paper proposes an expanded version of the Local Variance Gamma model of Carr and Nadtochiy by adding drift to the governing underlying process. Still in this new model it is possible to derive an ordinary differential equation for the option price which plays a role of Dupire's equation for the standard local volatility model. It is shown how calibration of multiple smiles (the whole local volatility surface) can be done in such a case. Further, assuming the local variance to be a piecewise linear function of strike and piecewise constant function of time this ODE is solved in closed form in terms of  Confluent hypergeometric functions. Calibration of the model to market smiles does not require solving any optimization problem and, in contrast, can be done term-by-term by solving a system of non-linear algebraic equations for each maturity, which is faster.
\end{abstract}

\begin{keyword}
local volatility, stochastic clock, Gamma distribution, piecewise linear variance, Variance Gamma process, closed form solution, fast calibration, no-arbitrage.
\end{keyword}

\end{frontmatter}

\section{Introduction}

Local volatility model was introduced by \cite{Dupire:94} and \cite{derman/kani:94} as a natural extension of the celebrating Black-Scholes model to take into account an existence of option smile. It is able to exactly replicate the local volatility function $\sigma(T,K)$ where $K,T$ are the option strike and time to maturity, at given pairs $(T,K)$  where the European options prices or their implied volatilities are known. This process is called calibration of the local volatility (or, alternatively, implied volatility) surface. Various approaches to solving this important problem were proposed, see, eg, survey in  \cite{ItkinLipton2017} and references therein.

As mentioned in \cite{ItkinLipton2017}, there are two main approaches to solving the calibration problem. The first approach relies on some parametric or non-parametric regression to construct a continuous implied volatility (IV) surface matching the given market quotes. Then the corresponding local volatility surface can be found via the well-known Dupire's formula, see, e.g., \cite{ItkinSigmoid2015} and references therein.

The second approach relies on the direct solution of the Dupire equation using either analytical or numerical methods. The advantage of this approach is that it guarantees no-arbitrage \footnote{But only if an analytical or numerical method in use does preserve no-arbitrage. This includes various interpolations, etc.}. However, the problem of the direct solution can be ill-posed, \cite{Coleman2001}, and is rather computationally expensive. For instance, in \cite{ItkinLipton2017} the Dupire equation (a partial differential equation (PDE) of the parabolic type) is solved by i) first using the Laplace-Carson transform, and ii) then applying various transformations to obtain a closed form solution of the transformed equation in terms of Kummer hypergeometric functions. Still, it requires an inverse Laplace transform to obtain the final solution.

With the second approach in use one also has to make an assumption about the behavior of the local/implied volatility surface at strikes and maturities where the market quotes are not known. Usually, by a tractability argument the corresponding local variance is seen either piecewise constant, \cite{LiptonSepp2011}, or piecewise linear \cite{ItkinLipton2017} in the log-strike space, and piecewise constant in the time to maturity space \footnote{See, however, comments in \cite{ItkinLipton2017} about their assumptions.}.

To improve computational efficiency of calibration, an important step is made in \cite{CarrNadtochiy2017} where Local Variance Gamma (LVG) model has been introduced (the first version refers to 2014 and can be found in \cite{CarrLVGOrig2014}).  This model assumes that the risk-neutral process for the underlying futures price is a pure jump Markov martingale, and that European option prices are given at a continuum of strikes and at one or more maturities. The authors construct a time-homogeneous process which meets a single smile and a piecewise time-homogeneous process, which can meet multiple smiles. However, in contrast to eg, \cite{ItkinLipton2017}, their construction leads not to a PDE, but to a partial differential difference equation (PDDE), which permits both explicit calibration and fast numerical valuation. In particular, it does not require application of any optimization methods, rather just a root solver. In \cite{CarrNadtochiy2017} this model is used to calibrate the local volatility surface assuming its piecewise constant structure in the strike space.

One of the potential criticism of this calibration method is the fact that the resulting local volatility function has a finite number of discontinuities. So it would be advantaged to relax the piecewise constant behavior of the surface. This is similar to how \cite{ItkinLipton2017} was developed to overcome the same problem as compared with \cite{LiptonSepp2011}.

On this way, recently \cite{FalckDeryabin2017} applied the LVG model to the FX options market where usually option prices are quoted only at five strikes. They assumed that the local volatility function is continuous, piecewise linear in the four inner strike subintervals and constant in the outer subintervals. A closed form solution of the PDDE derived in \cite{CarrLVGOrig2014}) is obtained with this parametrization, and calibration of some volatility smiles is provided. Still, to calibrate the model the authors rely on a residual minimization by using a least-square approach. So, despite an improved version of the LVG model is used, computational efficiency of this method is not perfect.

Another remark of \cite{CarrNadtochiy2017} is about the limitation that the risk-neutral price process of the underlying is assumed to be a martingale, i.e. the main driving process in \eqref{D} doesn't have a drift. However, the drift may not be negligible. If the drift is deterministic, e.g when the interest rate and dividends are deterministic, and the drift is a deterministic function of them,  the calibration problem can be reduced to the driftless case by discounting, but this assumption might be inconsistent with the market. Therefore, an expansion of the proposed model that allows for a non-zero and stochastic drift is very desirable. In particular, it would be interesting to expand the LVG model to a risk-neutral price process obtained by stochastic time change of a drifted diffusion. In this way, similar to local Variance Gamma model, \cite{MadCarrChang}, we introduce both stochastic volatility and stochastic drift.

With this in mind, our ultimate goals in this paper are as follows. First, we propose an expanded version of the LVG model by adding drift to the governing underlying process. It turns out that to proceed we need to re-derive and re-think every step in construction proposed in \cite{CarrNadtochiy2017}. We show that still it is possible to find an ordinary differential equation (ODE) for the option price which plays a role of Dupire's equation for the standard local volatility model, and how calibration of multiple smiles (the whole local volatility surface) can be done in such a case.

Further, assuming the local variance to be a piecewise linear function of strike and piecewise constant function of time we solve this ODE in closed form in terms of Confluent hypergeometric functions. Calibration of the model to market smiles does not require solving any optimization problem. In contrast, it can be done term-by-term by solving a system of non-linear algebraic equations for each maturity, and thus is much faster.

The rest of the paper is organized as follows. In Section~\ref{Process} the Expanded Local Variance Gamma model is formulated. In Section~\ref{ForwardEq} we derive a forward equation (which is an ordinary differential equation (ODE)) for Put option prices using a homogeneous Bochner subordination approach. Section~\ref{LVpwconst} generalizes this approach by considering the local variance being piece-wise constant in time. In Section~\ref{SolutionODE} a closed form solution of the derived ODE is given in terms of Confluent hypergeometric functions. The next Section discusses computation of a source term of this ODE which requires a no-arbitrage interpolation. Using the idea of \cite{ItkinLipton2017}), we show how to construct non-linear interpolation which provides both no-arbitrage, and a nice tractable representation of the source term, so that all integrals in the source term can be computed in closed form. In Section~\ref{calib} calibration of multiple smiles in our model is discussed in detail. To calibrate a single smile we derive a system of nonlinear algebraic equations for the model parameters, and explain how to obtain a smart guess for their initial values.  In Section~\ref{asympt}
asymptotic solutions of our ODE at extreme values of the model parameters are derived which improve computational accuracy and speed of the numerical solution. Section~\ref{numExp} presents the results of some numerical experiments where calibration of the model to the given market smiles is done term-by-term. The last Section concludes.

\section{Process} \label{Process}

Below where possible we follow the notation of \cite{CarrNadtochiy2017}.

Let $W_t$ be a $\mathbb{Q}$ standard Brownian motion with time index $t \ge 0$. Consider a stochastic process $D_t$ to be a time-homogeneous diffusion
\begin{equation} \label{D}
 d D_t = \mu D_t d t + \sigma(D_t) d W_t,
\end{equation}
\noindent where the volatility function $\sigma$ is local and time-homogeneous, and  $\mu$ is deterministic.

A unique solution to \eqref{D} exists if $\sigma(D) : \mathbb{R} \to \mathbb{R}$ is
Lipschitz continuous in $D$ and satisfies growth conditions at infinity. According to \eqref{D} we have $D_t  \in (-\infty,\infty)$ while $t \in [0,\infty)$. Since $D$ is a time-homogeneous Markov process, its infinitesimal generator $\calA$ is given by
\begin{equation} \label{gen}
\calA \phi(D) \equiv \left[\mu D \calD_D  + \frac{1}{2}\sigma^2(D) \calD^2_D \right] \phi(D)
\end{equation}
\noindent for all twice differentiable functions $\phi$. Here $\calD_x$ is a first order
differential operator on $x$. The semigroup of the $D$ process is
\begin{equation} \label{semig}
\calT^D_t \phi(D_t) = e^{t \calA} \phi(D_t)  = \EQ[\phi(D_t)|D_0 = D], \quad \forall t \ge 0.
\end{equation}

In the spirit of Variance Gamma model, \cite{MadanSeneta:90,MadCarrChang} and similar to \cite{CarrNadtochiy2017}, introduce a new process $D_{\Gamma_t}$ which is $D_t$ subordinated by the unbiased Gamma clock $\Gamma_t$. The density of the unbiased Gamma clock $\Gamma_t$ at time $t \ge 0$ is
\begin{equation} \label{gammaDen}
\mathbb{Q}\{\Gamma_t \in d\nu\} = \dfrac{\nu^{m-1} e^{-\nu m /t}}{(t^*)^m \Gamma(m)} d\nu, \quad \nu > 0, \quad m \equiv t/t^*.
\end{equation}
Here $t^* > 0$ is a free parameter of the process, $\Gamma(x)$ is the Gamma function. It is easy to check that
\begin{equation} \label{gammaExp}
\EQ[\Gamma_t] = t.
\end{equation}
\noindent Thus, on average the stochastic gamma clock $\Gamma_t$ runs synchronously with the calendar time $t$.

As applied to the option pricing problem, we introduce a more complex construction.
Namely, consider options written on the underlying process $S_t$. Without loss of generality and for the sake of clearness let us treat below $S_t$ as the stock price process. Here, in contrast to \cite{CarrNadtochiy2017}, we don't ignore interest rates $r$ and continuous dividends $q$ assuming them to be deterministic (below for simplicity of presentation we treat them as constants, but this can be easily relaxed). Then, let us define $S_t$ as
\begin{equation} \label{sub}
S_t = D_{\Gamma_{X(t)}}, \qquad X(t) = \dfrac{1 - e^{-(r-q) t}}{r-q}.
\end{equation}
\noindent It is clear that in the limit $ r \to 0, \ q \to 0$ we have $X(t) = t$, i.e., in this limit our construction coincides with that in \cite{CarrNadtochiy2017} who considered a driftless diffusion and assumed $S_t = D_{\Gamma_t}$. Also based on \eqref{gammaExp}
\begin{equation} \label{gammaXexp}
\EQ[\Gamma_{X(t)}] = X(t).
\end{equation}
Function $X(t)$ starts at zero, ie, $X(0) = 0$, and is a continuous increasing function of time $t$. Indeed, if $r - q > 0$, then $X(t)$ is increasing in $t$ on $t \in [0,\infty)$, and at $t \to \infty$ it tends to constant. The infinite time horizon is not practically important, but for any finite time $t$ function $X(t)$ can be treated as an increasing function in $t$. If $r - q < 0$, function $X(t)$ is strictly increasing $\forall t \in [0,\infty)$. Thus, $X(t)$ has all properties of a good clock. Accordingly, $\Gamma_{X(t)}$ has all properties of a random time.

Under a risk-neutral measure $\mathbb{Q}$, the total gain process, including the underlying price appreciation and dividends, after discounting at the risk free rate should be a martingale, see, eg, \cite{Shreve:1992}. This process obeys the following stochastic differential equation
\begin{align} \label{mart}
d \left( e^{-r t} S_t e^{q t} \right) &=  e^{(q-r)t} \left[ (q-r) S_t dt + d S_t\right]. \end{align}
Taking an expectation of both parts we obtain
\begin{align} \label{martCond}
\EQ[d \left( e^{(q-r)t} S_t\right)] &=
e^{(q-r)t} \left\{ (q-r)  \EQ[S_t] d t + d \EQ[S_t] \right\}.
\end{align}
Observe, that from \eqref{sub}, \eqref{D}
\begin{align} \label{e1}
\EQ[d S_t] &= \EQ[d \dgxt] = \mu \EQ [\dgxt d \gxt] + \EQ[\sigma(D_{\Gamma_{X(t)}}) d W_{\Gamma_{X(t)}}] = \mu \EQ [\dgxt d \gxt],
\end{align}
\noindent because the process $W_{\Gamma_t}$ is a local martingale, see \cite{RevuzYor1999}, chapter 6. Accordingly, the process $W_{\Gamma_{X(t)}}$ inherits this property from $W_{\Gamma_t}$, hence $\EQ[\sigma(D_{\Gamma_{X(t)}}) d W_{\Gamma_{X(t)}}] = 0$.

Further assume that the Gamma process $\Gamma_t$ is independent of $W_t$ (and, accordingly, $\gxt$ is independent of $W_\gxt$). Then the expectation in the RHS of \eqref{e1} can be computed, by first conditioning on $\gxt$, and then integrating over the distribution of $\gxt$ which can be obtained from \eqref{gammaDen} by replacing $t$ with $X(t)$, i.e.
\begin{align} \label{e2}
\EQ [\dgxt d \gxt  | S_s] &= \int_0^\infty  \EQ [\dgxt d \gxt| \gxt = \nu] \dfrac{\nu^{m-1} e^{-\nu m /X(t)}}{(t^*)^m \Gamma(m)} \\
&= \int_0^\infty  \EQ [D_\nu] \dfrac{\nu^{m-1} e^{-\nu m /X(t)}}{(t^*)^m \Gamma(m)} d \nu, \quad
\nu > 0, \quad m \equiv X(t)/t^*. \nonumber
\end{align}
The find $\EQ [D_\nu]$ we take into account \eqref{D} to obtain
\begin{align} \label{e3}
d \EQ [D_\nu] =  \EQ [d D_\nu]
= \EQ [\mu D_\nu d\nu + \sigma(D_\nu) D_\nu d W_\nu] = \mu \EQ [D_\nu] d\nu.
\end{align}
Solving this equation with respect to $y(\nu) = \EQ [D_\nu  | D_s]$,  we obtain $\EQ [D_\nu  | D_s] = D_s e^{\mu (\nu-s)}$. Since we condition on time $s$, it means that $D_s = D_{\Gamma_{X(s)}} = S_s$, and thus
$\EQ [D_\nu  | D_s] = S_s e^{\mu (\nu-s)}$.

Further, we substitute this into \eqref{e2}, set the parameter of the Gamma distribution $t^*$ to be $t^* = X(t)$ (so $m = 1$) and integrate to obtain
\begin{equation} \label{intRight}
d \EQ[S_t  | S_s] = \EQ[d S_t  | S_s] = \mu \EQ [\dgxt d \gxt] = S_s e^{- s \mu} \frac{\mu}{1 - \mu X(t)}.
\end{equation}
Setting now $m = r-q$ and solving this equation we find
\begin{equation} \label{ESsol}
\EQ[S_t  | S_s] = S_s (r-q) e^{(q-r)(s-t)}.
\end{equation}
Substituting \cref{ESsol} and \cref{intRight} into \cref{martCond} yields $d \left( e^{-r t} S_t e^{q t} \right) = 0$. Thus, if we chose $\mu = r-q$, the right hands part of \eqref{mart} vanishes, and our discounted stock process with allowance for non-zero interest rates and continuous dividends becomes a martingale. So the proposed construction can be used for option pricing.

This setting can be easily generalized for time-dependent interest rates $r(t)$ and continuous dividends $q(t)$. We leave it for the reader.

The next step is to consider connection between the original and time-changed processes. It is known from \cite{BochnerPDE1949}  that the process $G_{\Gamma_t}$ defined as
\[ d G_t = \sigma^2(G) d W_t \]
\noindent is a time-homogeneous Markov process. As the deterministic process $\mu t$ is also time-homogeneous, the whole process $D_t$ defined in \eqref{D} is also a time-homogeneous Markov process. Accordingly, the semigroups $T^S_t$ of $S_t$ and $T^D_t$ of $D_{\Gamma_{X(t)}}$ are connected by the Bochner integral
\begin{equation} \label{BI}
\calT^S_t U(S) = \int_0^\infty \calT^D_\nu U(S) \mathbb{Q}\{\gxt \in d\nu\}, \quad \forall t \ge 0,
\end{equation}
\noindent where $U(S)$ is a function in the domain of both $\calT^D_t$ and $\calT^S_t$.
It can be derived by exploiting the time homogeneity of the $D$ process, conditioning on the gamma time first, and taking into account the independence of $\Gamma_t$ and $W_t$ (or $\Gamma_\gxt$ and $W_\gxt$ in our case).

We set parameter $t^*$ of the gamma clock to $t^* = X(t)$. Then \eqref{BI} and \eqref{gammaDen} imply
\begin{equation} \label{BI1}
\calT^S_{t} U(S) = \int_0^\infty \calT^D_\nu U(S) \dfrac{e^{-\nu/X(t)}}{X(t)} d\nu.
\end{equation}
In what follows for the sake of brevity we will call this model as an Expanded Local Variance Gamma model, or ELVG.

\section{Forward equation for Put option prices} \label{ForwardEq}

Following \cite{CarrNadtochiy2017} we interpret the index $t$ of the semigroup $\calT^S_t$ as the maturity date $T$ of a European claim with the valuation time $t = 0$. Also let the test function $U(S)$ be the payoff of this European claim, ie,
\begin{equation} \label{payoff}
U(S_T) = e^{-r T}(K - S_T)^+.
\end{equation}
Then define
\begin{equation} \label{P0}
P(S_0,T,K) = \calT^S_T U(S_0)
\end{equation}
\noindent as the European Put value with maturity $T$ at time $t=0$ in the ELVG model. Similarly
\begin{equation} \label{P0D}
P^D(S_0,\nu,K) = \calT^D_\nu U(S_0)
\end{equation}
\noindent would be the European Put value with maturity $\nu$ at time $t=0$ in the model of \eqref{D}\footnote{Below for simplicity of notation we drop the subscript '0'  in $S_0$.}. Then the Bochner integral in \eqref{BI1} takes the form
\begin{equation} \label{Bochner2}
P(S,T,K) =  \int_0^\infty P^D(S,\nu,K) p e^{- p \nu} d \nu,  \quad p \equiv 1/X(T).
\end{equation}
Thus, $P(S,X(T),K)$ is represented by a Laplace-Carson transform of $P^D(S,\nu,K)$ with $p$ being a parameter of the transform. Note that
\begin{equation} \label{init}
P(S,0,K) = P^D(S,0,K) = U(S).
\end{equation}
To proceed, we need an analog of the Dupire forward PDE for $P^D(S,\nu,K)$.

\subsection{Derivation of the Dupire forward PDE \label{dupFWPDE}}

Despite this can be done in many different ways, below for the sake of compatibility we do it in the spirit of \cite{CarrNadtochiy2017}.

First, differentiating \eqref{P0D} by $\nu$ with allowance for \eqref{semig} yields
\begin{align} \label{Dt}
\calD_\nu P^D(S,\nu,K) &= e^{-r \nu} e^{\nu \calA}\left[ \calA - r\right] U(S)
= e^{-r \nu} \EQ \left[ \calA - r\right] U(S).
\end{align}
We take into account the definition of the generator $\calA$ in \eqref{gen}, and also remind that at $t=0$ we have $D_0 = S_0$. Then \eqref{Dt} transforms to
\begin{equation} \label{Dt1}
\calD_\nu P^D(S,\nu,K) = -r P^D(S,\nu,K) + (r-q) S \calD_S P^D(S,\nu,K) + e^{-r \nu}\frac{1}{2} \EQ \left[ \sigma^2(S) \calD_S^2 U(S) \right].
\end{equation}
However, we need to express the forward equation using a pair of independent variables $(\nu,K)$ while \eqref{Dt} is derived in terms of $(\nu,S)$. To do this, observe that
\begin{align} \label{sir}
e^{-r \nu} \EQ \left[\sigma^2(S) \calD_S^2 U(S) \right] &= e^{-r \nu}\EQ \left[\sigma^2(S) \delta(K-S)\right] = e^{-r \nu} \EQ \left[\sigma^2(K) \delta(K-S)\right] \\
&=  e^{-r \nu} \EQ \left[\sigma^2(K) \calD_K^2 U(S)\right] = \sigma^2(K) \calD_K^2 P^D(S,\nu,K). \nonumber
\end{align}
\noindent where the sifting property of the Dirac delta function $\delta(S-K)$ has been used. Also
\begin{align} \label{term2}
-r & P^D(S,\nu,K) + (r-q) S \nabla_S P^D(S,\nu,K) \\
&= e^{-r \nu} \EQ\left[ -r (K-S)^+ + (r-q)S \fp{(K-S)^+}{S} \right] \nonumber \\
&= e^{-r \nu} \EQ\left[ -r (K-S)^+ - (r-q)(K-S) \fp{(K-S)^+}{S} + (r-q)K \fp{(K-S)^+}{S}\right] \nonumber \\
&= e^{-r \nu} \EQ\left[ -r (K-S)^+ + (r-q)(K-S)^+  - (r-q)K \fp{(K-S)^+}{K}\right] \nonumber \\
&= -q P^D(S,\nu,K) - (r-q) K \nabla_K P^D(S,\nu,K). \nonumber
\end{align}
Therefore, using \eqref{sir} and \eqref{term2}, \eqref{Dt} could be transformed to
\begin{align} \label{Dup1}
\nabla_\nu P^D(S,\nu,K) &= -q P^D(S,\nu,K) - (r-q) K \nabla_K P^D(S,\nu,K) + \frac{1}{2} \sigma^2(K) K^2 \nabla_K^2 P^D(S,\nu,K) \nonumber  \\
&\equiv {\calA}^K P^D(S,\nu,K),  \\
{\calA}^K &= -q  - (r-q) K \nabla_K  + \frac{1}{2} \sigma^2(K) K^2 \nabla_K^2. \nonumber
\end{align}
This equation looks exactly like the Dupire equation with non-zero interest rates and continuous dividends, see, eg, \cite{Tysk2012} and references therein. Note, that $\calA^K$ is also a time-homogeneous generator.

\subsection{Forward partial divided-difference equation} \label{FPDDder}

Our final step is to apply the linear differential operator ${\calA}^K$ defined in  \eqref{Dup1} to both parts of \eqref{Bochner2}. Using time-homogeneity of $D_t$ and, again, the Dupire equation \eqref{Dup1}, we obtain
\begin{align} \label{b2}
-q &P(S,T,K) - (r-q) K \calD_K P(S,T,K) + \frac{1}{2} \sigma^2(K) \calD_K^2
P(S,T,K) \\
&= \int_0^\infty p e^{- p \nu} \left[-q P^D(S,\nu,K) - (r-q) K \calD_K P^D(S,\nu,K) + \frac{1}{2} \sigma^2(K) \calD_K^2 P^D(S,\nu,K)\right] d \nu \nonumber \\
&= \int_0^\infty p e^{- p \nu} \calD_\nu P^D(S,\nu,K) d \nu  = - p P^D(S,0,K) + p \int_0^\infty P^D(S,\nu,K) p e^{- p \nu} d \nu \nonumber \\
&= p\left[P(S,T,K) - P^D(S,0,K) \right] = p\left[P(S,T,K) - P(S,0,K) \right], \nonumber
\end{align}
\noindent where in the last line \eqref{init} was taken into account.

Thus, finally $P(S,T,K)$ solves the following problem
\begin{align} \label{finDup}
-q P(S,T,K) &- (r-q) K \calD_K P(S,T,K) + \frac{1}{2} \sigma^2(K) \calD_K^2
P(S,T,K) \\
&= \dfrac{P(S,T,K) - P(S,0,K)}{X(T)}, \qquad P(S,0,K) = (K-S)^+. \nonumber
\end{align}
At $ r = q = 0$ this equation translates to the corresponding equation in \cite{CarrNadtochiy2017}. In contrast to the Dupire equation which belongs to the class of PDE, \eqref{finDup} is an ODE, or, more precisely, a partial divided-difference equation (PDDE), since the derivative in time in the right hands part is now replaced by a divided difference. In the form of an ODE it reads
\begin{equation} \label{finDupPut}
\left[\frac{1}{2} \sigma^2(K) \calD_K^2 - (r-q) K \calD_K - \left(q + \dfrac{1}{X(T)}\right) \right] P(S,T,K) =
- \dfrac{P(S,0,K)}{X(T)}.
\end{equation}
This equation could be solved analytically for some particular forms of the local volatility function $\sigma(K)$ which are considered later in this paper. Also in the same way a similar equation could be derived for the Call option price $C_0(S,T,K)$ which reads
\begin{align} \label{finDupCall}
\Big[\frac{1}{2} \sigma^2(K) \calD_K^2 + (r-q) K \calD_K &- \left(q + \dfrac{1}{X(T)}\right) \Big] C_0(S,T,K) = - \dfrac{C_0(S,0,K)}{X(T)}, \nonumber \\
C_0(S,0,K) &= (S-K)^+.
\end{align}

Solving \eqref{finDupPut} or \eqref{finDupCall} provides the way to determine $\sigma(K)$ given market quotes of Call and Put options with maturity $T$. However, this allows calibration of just a single term. Calibration of the whole local volatility surface, in principle, could be done term-by-term (because of the time-homogeneity assumption) if \eqref{finDupPut}, \eqref{finDupCall} could be generalized to this case. We consider this in the following Section.

\section{Local variance piece-wise constant in time} \label{LVpwconst}

To address calibration of multiple smiles, we need to relax some assumptions about time-homogeneity of the process $D_t$ defined in \eqref{D}. This includes several steps which are described below in more detail.

\subsection{Local variance}

Here we assume that the local variance $\sigma(D_t)$ is no more time-homogeneous, but a piece-wise constant function of time $\sigma(D_t,t)$.

Let $T_1, T_2, \ldots, T_M$ be the time points at which the variance rate $\sigma^2(D_t)$ jumps deterministically. In other words, at the interval $t \in [T_0,T_1)$, the variance rate is $\sigma^2_0(D_t)$, at $t \in [T_1, T_2)$ it is $\sigma^2_1(D_t)$, etc. This can be also represented as
\begin{align} \label{sigmaPW}
\sigma^2(D_t,t) &= \sum_{i=0}^M \sigma^2_i(D_t) w_i(t), \\
w_i(t) &\equiv {\mathbf 1}_{t - T_i} - {\mathbf 1}_{t - T_{i+1}}, \ i=0,\ldots,M,
\quad T_0 = 0, \ T_{M+1} = \infty, \quad {\mathbf 1}_x =
\begin{cases}
1, & x \ge 0 \\
0, & x < 0.
\end{cases}
\nonumber
\end{align}

Note, that
\[ \sum_{i=0}^M w_i(t) = {\mathbf 1}_t - {\mathbf 1}_{t-\infty} = 1, \quad \forall t \ge 0.\]
\noindent Therefore, in case when all $\sigma^2_i(D_t)$ are equal, ie, independent on index $i$, \eqref{sigmaPW} reduces to the case considered in the previous Sections.

It is important to notice that our construction implies that the volatility $\sigma(D_t)$ jumps as a function of time at the calendar times $T_0, T_1,\ldots,T_M$, and not at the business times $\nu$ determined by the gamma clock. Otherwise, the volatility function would change at random (business) times which means it is stochastic. But this definitely lies out of scope of our model. Therefore, we need to change \eqref{sigmaPW} to
\begin{align} \label{sigmaPW_exp}
\sigma^2(\dgxt, \gxt) &= \sum_{i=0}^M \sigma^2_i(D_t) {\bar w}_i(\EQ(\gxt)), \\
{\bar w}_i(t) &= {\mathbf 1}_{X^{-1}(t) - T_i} - {\mathbf 1}_{X^{-1}(t) - T_{i+1}}, \ i=0,\ldots,M, \nonumber \\
X^{-1}(t) &= \dfrac{1}{q-r} \log \left[1 - (r-q)t \right].
\end{align}
Hence, when using \eqref{sub} we have
\begin{align} \label{sigmaPW_exp1}
\sigma^2(D_t, t)\Big|_{t = \Gamma_{X(t)}} &= \sum_{i=0}^M \sigma^2_i(D_t) \bar{w}_i(X(t)) = \sum_{i=0}^M \sigma^2_i(D_t) w_i(t).
\end{align}

Accordingly, if the calendar time $t$ belongs to the interval $T_0 \le t < T_1$, the infinitesimal generator $\calA$ of the semigroup $\calT^D_\nu$ is a function of $\sigma(D_t)$ (and not on $\sigma(D_\nu)$). As at $T_0 \le t < T_1$ we assume $\sigma(D) = \sigma_0(D)$, i.e., is constant in time, it doesn't depend of $\nu$. Thus, $\calA$ (which for this interval of time we will denote as $\calA_0$) is still time-homogeneous.

Similarly, one can see, that for $T_1 \le t < T_2$ the infinitesimal generator $\calA_1$ of the semigroup $\calT^D_\nu$ is also time-homogeneous and depends on $\sigma_1(D)$, etc.

\subsection{Bochner subordination}

We start with re-definition of \eqref{P0}, \eqref{P0D}. We now define the European Put value with maturity $T$ at the evaluation time $t=X(T_1)$ in the ELVG model
\begin{equation} \label{P02}
P(S_0,T_1 + T,K) = \calT^S_T [e^{-r T} P(S_0,T_1,K)].
\end{equation}
And, clearly we are interesting in the value of $T$ to be $T = T_2 - T_1$.

Similarly, we define the European Put value with maturity $\nu$ at the evaluation time $t=T_1$ in the model given by \eqref{D} as
\begin{equation} \label{P0D2}
P^D(S_0,T_1 + \nu,K) = \calT^D_\nu [e^{-r \nu} P(S_0,T_1,K)].
\end{equation}
By these definitions
\[ P(S_0,T_1 + T,K)\Big|_{T=0} = P^D(S_0,T_1 + \nu,K)\Big|_{\nu=0} = P(S_0,T_1,K)]. \]

In contrast to \eqref{Bochner2}, in case of multiple smiles at $t > T_1$ we need to change the definition of $t$ in \eqref{BI1} from $t \mapsto X(t)$ to
\begin{equation} \label{newt*}
t \mapsto X(T_1+t) - X(T_1) \equiv \Delta x(T_1,t).
\end{equation}
This definition implies two observations.

First, function $\Delta x(T_1,t)$ starts at zero at $t=0$ and is an increasing function of time. Also, in case $r=q=0$ we have $\Delta x(T_1,t) = t$. Therefore, $\Delta x(T_1,t)$ can be used as a good clock. Accordingly, similar to \eqref{gammaExp} we have
\begin{equation} \label{gammaXexp1}
\EQ[\Gamma_{\Delta x(T_1,t)}] = \Delta x(T_1,t).
\end{equation}
Second, a proof that in our model the discounted stock price is a martingale
given in Section~\ref{Process} could be repeated for times $t: \ T_1 < t \le T_2$.
When doing so, at $t > T_1$ we reset the definition of $S_t$ to
\[ S_{T_1+t} = D_{\Gamma_{\Delta x(T_1,t) }}, \quad t \ge 0. \]
Then instead of \eqref{e1} we now have
\begin{align} \label{e11}
\EQ[d S_{T_1+t}] &= \EQ[d D_{\Gamma_{\Delta x(T_1,t)}}] = \mu \EQ [D_{\Gamma_{\Delta x(T_1,t)}} d \Gamma_{\Delta x(T_1,t)}] + \EQ[\sigma(D_{\Gamma_{\Delta x(T_1,t)}}) d W_{\Gamma_{\Delta x(T_1,t) }}] \\
&= \mu \EQ [D_{\Gamma_{\Delta x(T_1,t)}}] d \Delta x(T_1,t) =
\mu \EQ [D_{\Gamma_{\Delta x(T_1,t)}}] d X(T_1+t). \nonumber
\end{align}
On the other hand,
\begin{align} \label{martCond1}
\EQ[d \left( e^{(q-r)(T_1+t)} S_{T_1+t} \right)] &=
e^{(q-r)(T_1+t)} \left\{ (q-r)  \EQ[S_{T_1+t}] d t + d \EQ[S_{T_1+t}] \right\} \\
&= e^{(q-r)t}[\mu + (q-r)S_{T_1}e^{-(r-q)T_1}] dt \nonumber
\end{align}

One can check, that with $\mu = r-q$ the RHS of \cref{martCond1} vanishes, therefore this construction can be used for option pricing.

The definition in \eqref{newt*} implies that parameter $t$ of the Gamma random clock is reset at the point $T_1$, i.e., at $0 \le t \le T_1$ it is $t \mapsto X(t) = X(t) - X(0)$, while at $T_1 < t \le T_2$ it is $t \mapsto X(T_1+t) - X(T_1)$. Using the definition of $w_i(t)$ in \eqref{sigmaPW}, this could be written as
\begin{equation} \label{t*2}
t \mapsto \sum_{i=0}^M w_i(T_i+t)[X(T_i+t) - X(T_i)]
\end{equation}
Resetting $t$ was also first proposed in \cite{CarrNadtochiy2017} but in a different form.

Then, the Bochner integral in \eqref{BI1} transforms to
\begin{equation} \label{BI3}
\calT^S_{T} P(S,T_1,K) = \int_{0}^\infty \calT^D_\nu P(S,T_1 + \nu,K) \dfrac{\nu^{m-1} e^{-\nu m /\Delta X(T_1,T)}}{(t^*)^m \Gamma(m)} d\nu.
\end{equation}
Since for a tractability reason we still want to have $m \equiv \Delta X(T_1,T)/t^* = 1$. we need to redefine $t^*$ in accordance with \eqref{t*2}. Based on that, the Bochner integral in \eqref{Bochner2} now finally reads
\begin{align} \label{Bochner22}
P(S,T_1 + T,K) &= \int_0^\infty P^D(S,T_1+\nu,K) p e^{- p \nu} d \nu,  \quad
p \equiv 1/\Delta X(T_1, T).
\end{align}

\subsection{Forward partial divided-difference equation for the second term} \label{FPDDder2}

Now we need to derive a Forward partial divided-difference equation for the second term $T_2$ similar to how this is done in Section~\ref{FPDDder}. Obviously, the Put price $P^D(S_0,T_1 + \nu,K)$ solves the same Dupire equation \eqref{Dup1}. Therefore, proceeding in the same way as in Section~\ref{FPDDder}, we apply linear differential operator $\cal L$ defined in  \eqref{Dup1} to both parts of \eqref{Bochner22}. Using time-homogeneity of $D_t$ at the interval $[T_1, T_2)$ and again the Dupire equation \eqref{Dup1}, we obtain
\begin{align} \label{b21}
-q &P(S,T_1 + T,K) - (r-q) K \calD_K P(S,T_1 + T,K) + \frac{1}{2} \sigma^2(K) \calD_K^2 P(S,T_1 + T,K) \nonumber \\
&= \int_0^\infty p e^{- p \nu} \Big[-q P^D(S,T_1 + \nu,K) - (r-q) K \calD_K P^D(S,T_1 + \nu,K)  \\
&+ \frac{1}{2} \sigma^2(K) \calD_K^2 P^D(S,T_1 + \nu,K)\Big] d \nu
= \int_0^\infty p e^{- p \nu} \calD_\nu P^D(S,T_1 + \nu,K) d \nu \nonumber \\
&= - p P^D(S,T_1,K) + p \int_0^\infty P^D(S,T_1+\nu,K) p e^{- p \nu} d\nu \nonumber \\
&= p\left[P(S,T_1 + T,K) - P^D(S,T_1,K)\right]  = p\left[P(S,T_1 + T,K) - P(S,T_1,K) \right]. \nonumber
\end{align}
Finally, taking $T = T_2  - T_1$ we obtain an ODE for the Put price $P(S,T_2,K)$.
\begin{equation} \label{finDupPut2}
\left[\frac{1}{2} \sigma^2(K) \calD_K^2 - (r-q) K \calD_K - \left(q + \dfrac{1}{
X(T_2) - X(T_1)}\right) \right] P(S,T_2,K) = - \dfrac{P(S,T_1,K)}{X(T_2) - X(T_1)}.
\end{equation}
Here the local variance function $\sigma^2(K) = \sigma^2_1(K)$ as it corresponds to the interval $(T_1,T_2]$ where the above ODE is solved.

We continue in the same way to derive an ODE for the Put price $P(S,T_i,K), \ i=1,\ldots,M$, which finally reads
\begin{equation} \label{finDupPutI}
\left[\frac{1}{2} \sigma^2(K) \calD_K^2 - (r-q) K \calD_K - \left(q + \dfrac{1}{
X(T_i) - X(T_{i-1})}\right) \right] P(S,T_i,K) = - \dfrac{P(S,T_{i-1},K)}{X(T_i) - X(T_{i-1})}.
\end{equation}
This is a recurrent equation that can be solved for all $i=1,\ldots,M$ sequentially starting with $i=1$ subject to some boundary conditions. The natural boundary conditions for the Put option price are, \cite{hull:97}
\begin{equation} \label{bc}
\begin{array}{lll}
P(S,T_i,K) = 0, & & K \to 0, \\
P(S,T_i,K) = \disD_i K - \disQ_i S \approx \disD_i K, & & K \to \infty,
\end{array}
\end{equation}
\noindent where $\disD_i = e^{-r T_i}$ is the discount factor, and $\disQ_i = e^{-q T_i}$.

A similar equation can be obtained for the Call option prices, which reads
\begin{equation} \label{finDupCallI}
\left[\frac{1}{2} \sigma^2(K) \calD_K^2 + (r-q) K \calD_K - \left(q + \dfrac{1}{
X(T_i) - X(T_{i-1})}\right) \right] C(S,T_i,K) = - \dfrac{C(S,T_{i-1},K)}{X(T_i) - X(T_{i-1})},
\end{equation}
\noindent subject to the boundary conditions
\begin{equation} \label{bcCall}
\begin{array}{lll}
C(S,T_i,K) = \disQ_i S, & & K \to 0, \\
C(S,T_i,K) = 0, & & K \to \infty.
\end{array}
\end{equation}

\section{Solution of the ODE \eqref{finDupPutI}} \label{SolutionODE}

Below we use the approach similar to \cite{ItkinLipton2017} by assuming the local variance to be a piecewise linear continuous function of strike. In contrast to \cite{ItkinLipton2017}, instead of a standard local volatility model in this paper we use the ELVG model. As the result, instead of a partial differential (Dupire) equation, we face a problem of solving the ODE in \eqref{finDupPutI}.

First, it is useful to change the dependent variable from $P(S,T_j,K)$ to
\[ V(S,T_j,K) = P(S,T_j,K) - \disD_j K, \]
\noindent which is known as a {\it covered Put}. The advantage of the covered Put is that according to \eqref{bc} its price obeys homogeneous boundary conditions.

Using this definition we now re-write \eqref{finDupPutI} in a more convenient form (while with some loose of notation)
\begin{align} \label{finODE}
& - v(x) V_{x,x}(x) + b_1 x V_x(x) + b_{0,j} V(x) = c_j, \\
b_1 &= (r - q) p_j, \quad b_{0,j} = q p_j + 1, \quad c_j = V(T_{j-1},x) + \beta x, \nonumber \\
p_j &= X(T_j) - X(T_{j-1}) > 0, \quad x = \frac{K}{S}, \quad V(x) = V(S,T_j,x), \quad v(x) = p_j\frac{\sigma^2(x)}{2 S^2}. \nonumber \\
\beta &= - S[\disD_j(1 + p_j r) - \disD_{j-1}]. \nonumber
\end{align}
In \eqref{finODE} $x$ is the inverse moneyness. In what follows we also assume that $r > q > 0$, but this assumption could be easily relaxed.

Further, suppose that for each maturity $T_j, \ j \in [1,M]$ the market quotes are provided at a set of strikes $K_i, \ i=1,\ldots,n_j$ where the strikes are assumed to be sorted in the increasing order. Then the corresponding continuous piecewise linear local variance function $v_j(x)$ on the interval $[x_{i},x_{i+1}]$ reads
\begin{equation} \label{vDef}
v_{j,i}(x) = v^0_{j,i} + v^1_{j,i} x,
\end{equation}
\noindent where we use the super-index $0$ to denote a level $v^0$, and the super-index $1$ to denote a slope $v^1$. Subindex $i=0$ in $v^0_{j,0}, v^1_{j,0}$ corresponds to the interval $(0, x_1]$. Since $v_j(x)$ is continuous, we have
\begin{equation} \label{cont}
v^0_{j,i} + v^1_{j,i} x_{i+1} = v^0_{j,i+1} + v^1_{j,i+1} x_{i+1}, \quad i=0,\ldots,n_j-1.
\end{equation}
The first derivative of $v_j(x)$ experiences a jump at points $x_{i}, \irg{i}{1}{n_j}$. We also assume that $v(x,T)$ is a piecewise constant function of time, i.e., $v^0_{j,i}, v^1_{j,i}$ don't depend on $T$ on the intervals $[T_j, T_{j+1}), \ j \in [0,M-1]$, and jump to new values at the points $T_j, \irg{j}{1}{M}$.

With the above assumptions in mind, \eqref{finODE} can be solved by induction. One starts with $T_0 = 0$, and on each time interval $[T_{j-1},T_j], \ \irg{j}{1}{M}$ solves the problem \eqref{finODE} for $V(x) \mapsto P(S,T_j,x) - d_j S x$.

Since $v(x)$ is a piecewise linear function, the solution of \eqref{finODE} can also be constructed separately for each interval $[x_{i-1},x_i]$. By taking into account the explicit representation of $v(x)$ in \eqref{vDef}, from \eqref{finODE} for the $i$-th spatial interval we obtain
\begin{align} \label{Laplace2}
-(b_2 + a_2 x) V_{xx}(x) &+ b_1 x V_x(x) + b_0 V(x) = c \\
b_2 &= v^0_{j,i}, \ a_2 = v^1_{j,i}. \nonumber
\end{align}
We proceed by introducing a new independent variable $z = (b_2 + a_2 x)b_1/a_2^2, \ z \in \mathbb{R}^+$, so that \eqref{Laplace2} transforms to
\begin{align} \label{Laplace3}
-z V_{zz}(z) &+ (z-q_2)V_z(z) + q_1 V(z) = \chi \\
q_1 &= b_0/b_1, \ q_2 = b_2 b_1/a_2^2, \ \chi = c/b_1. \nonumber
\end{align}

The \eqref{Laplace3} is an {\it inhomogeneous} Laplace equation, \cite{PolyaninSaitsevODE2003}, page 155. It is well known that if $y_1=y_1(z)$, $y_2=y_2(z)$ are two fundamental solutions of the corresponding {\it homogeneous} equation, then the general solution of \eqref{Laplace3} can be represented as
\begin{align} \label{solInhom}
V(z) &= C_1 y_1(z) + C_2 y_2(z) + \frac{1}{b_1} I_{12}(z) \\
I_{12}(z) &= -y_2(z) \int \dfrac{ y_1(z) f(z)} {W z}d z + y_1(z) \int  \dfrac{ y_2(z) f(z)}{W z}d z \equiv I_1 + I_2, \nonumber \\
f(z) &= V(T_{j-1},z) - k_1 - k_2 z, \quad
k_1 = \beta \frac{b_2}{a_2}, \quad k_2 = - \beta \frac{a_2}{b_1}, \nonumber
\end{align}
\noindent where $W = y_1 (y_2)_z - y_2 (y_1)_z$ is the so-called Wronskian, and $\beta$ is defined in \eqref{finODE}. Then the problem is to determine suitable fundamental solutions of the homogeneous Laplace equations. Based on \cite{PolyaninSaitsevODE2003}, if $a_2 \ne 0$, they read
\begin{equation} \label{homog}
y_i(z) = {\mathcal V}_i (q_1, q_2, z), \quad i=1,2
\end{equation}
Here ${\mathcal V}_i(a,b,z)$ is an arbitrary solution of the degenerate hypergeometric equation, i.e., Kummer's function, \cite{as64}. Two types of Kummer's functions are known, namely $M(a,b,z)$ and $U(a,b,z)$, which are Kummer's functions of the first and second kind.

It is known, that there exist several pairs of such independent solutions. Therefore, for every spatial interval in $z$ among all possible fundamental pairs we have to determine just one which is numerically satisfactory at this interval (see \cite{Olver1997} for the detailed definition of satisfactory solutions and the corresponding discussion). Since our boundary conditions are set at zero and positive infinity, we need a numerically satisfactory solution for the positive half of the real line.

Similar to \cite{ItkinLipton2017}, in the vicinity of the origin we choose the numerically satisfactory pair as, \cite{Olver1997}
\begin{align} \label{Kummer0}
y_1(\chi) &= M\left(q_1, q_2, z\right) = e^z M\left(q_2-q_1, q_2, -z\right), \\
y_2(\chi) &= z^{1-q_2} M\left(q_1 - q_2 + 1, 2-q_2, z\right) =
z^{1-q_2} e^z M\left(1-q_1, 2-q_2, -z\right), \nonumber \\
W &= \sin(\pi q_2) z^{-q_2} e^{z}/\pi. \nonumber
\end{align}
However, in the vicinity of infinity the numerically satisfactory pair is, \cite{Olver1997}
\begin{align} \label{Kummer1}
y_1(\chi) &=   U\left(q_1, q_2, z\right) = z^{1-q_2} U\left(q_1-q_2+1, 2-q_2, z\right), \\
y_2(\chi) &= e^z U\left(q_2 - q_1, q_2, -z\right) = e^z z^{1-q_2} U\left(1-q_1, 2-q_2, -z\right), \nonumber \\
W &= (-1)^{q_1 - q_2} e^{z} z^{-q_2}. \nonumber
\end{align}

As two solutions $J_1(q_1,q_2, z), J_2(q_1,q_2, z)$ are independent, \eqref{solInhom} is a general solution of \eqref{Laplace3}. Two constants $C_1,C_2$ should be determined based on the boundary conditions for the function $V(z)$.

The boundary conditions for the ODE \eqref{Laplace2} in a strike $K$ space (or in $x$ space) should be set at zero and infinity. Based on the usual shape of the local variance curve and its positivity, for $x \to 0$, we expect that $v^1_{j,i} < 0$.  Similarly, for $x \to \infty$ we expect that $v^1_{j,i} > 0$. In between these two limits the local variance curve for a given maturity $T_j$ is assumed to be continuous, but the slope of the curve could be both positive and negative, see, e.g., \cite{ItkinSigmoid2015} and references therein. Also, by definition $z = v_{j,i}$, and $\mathrm{Dom}(z) = \mathbb{R}^+$. Thus, at high strikes $a_2 = v^1_{j,i} > 0$. Therefore, the boundary conditions for \eqref{Laplace3} should be set at $z = b_2$ (which corresponds to the boundary $K=0$) and at $z \to \infty$. These are the boundary conditions given in \eqref{bc}.

\section{Computation of the source term} \label{SolutionInt}

Computation of the source term $p I_{12}$ in \eqref{solInhom} could be achieved in several ways. The most straightforward one is to use numerical integration as the Put price $P(x,T_{i-1})$ as a function of $x$ is already known when we solve \eqref{finODE} for $T=T_i$. However, as this is discussed in detail in \cite{ItkinLipton2017}, function $P(x,T_{i-1})$ is known only for a discrete set of points in $x$. Therefore, some kind of interpolation is necessary to find its values at the other points.

\subsection{No-arbitrage interpolation}

As shown in \cite{ItkinLipton2017}, this interpolation must preserve no-arbitrage. So, for instance, a standard linear interpolation is not a good candidate, since its violates no-arbitrage conditions. Indeed, given three Put option prices $P(K_1), P(K_2), P(K_3)$ for three strikes $K_1 < K_2 < K_3$, the necessary and sufficient conditions for an arbitrage-free system read, \cite{CoxRubinstein1985}
\begin{align} \label{noarb}
P(K_3) &> 0, \qquad P(K_2) < P(K_3), \\
Bs = (K_3 - K_2)P(K_1) &- (K_3 - K_1)P(K_2) + (K_2 - K_1)P(K_3) > 0. \nonumber
\end{align}
Suppose that we want to use linear interpolation in the strike space on the interval $[K_1,K_3]$ to find the unknown Put option price $P(K_2)$ given the values of $P(K_1), P(K_3)$,
\begin{equation*}
P(K_2) \equiv P_l(K_2) = \dfrac{P(K_1) K_3 - P(K_3) K_1}{K_3 - K_1} + \dfrac{P(K_3) - P(K_1)}{K_3 - K_1} K_2.
\end{equation*}
When plugging this expression into the second line of \eqref{noarb}, the left hands side of the latter vanishes, so the third no-arbitrage condition is violated.

In \cite{ItkinLipton2017} it is shown that this problem could be resolved if we use linear interpolation with a modified independent variable (further on we denote it as $P_F(K)$),
\begin{align} \label{lin2}
P(K_2) &\equiv P_F(K_2) \\
&= \dfrac{P(K_1) f(K_3) - P(K_3) f(K_1)}{f(K_3) - f(K_1)} + \dfrac{P(K_3) - P(K_1)}{f(K_3) - f(K_1)} f(K_2), \nonumber
\end{align}
\noindent where $f(K)$ is a convex and increasing function in $[K_1,K_3]$. Indeed, if $f(K)$ is convex, then $P(K_2) = P_F(K_2) = P_l(K_2) - \varepsilon, \ \varepsilon > 0$ (see Fig.~2 in \cite{ItkinLipton2017}). Substitution of this expression into the second line of \eqref{noarb} gives $(K_3 - K_1) \varepsilon > 0$, which is true. The second condition in \eqref{noarb} now reads
\[ (P(K_1) - P(K_3))(f(K_3) - f(K_2))(f(K_1) - f(K_3)) > 0, \]
\noindent which is also true since $f(K)$ is an increasing function of $K$.

Alternatively, one can use non-linear interpolation. In \cite{ItkinLipton2017}) both approaches were combined, and it was proved that the new interpolation scheme preserves no-arbitrage. Moreover, the final representation of the modified Put price (which is a dependent variable in their approach) obtains a nice tractable representation, so the integral $I_{12}$ can be computed in closed form. Here we want to exploit the same idea, thus significantly improving performance of our model as compared with the numerical integration.

Therefore, here we propose the following interpolation scheme
\begin{align} \label{linNew}
P(x) &\equiv P_F(x) = \gamma_1 + \gamma_2 x^2, \quad x_1 \le x \le x_3, \\
\gamma_1 &= \dfrac{P(x_3) x_1^2   - P(x_1) x^2_3}{x_1^2 - x_3^2}, \qquad
\gamma_2 = \dfrac{P(x_1) - P(x_3)}{x_1^2 - x_3^2}. \nonumber
\end{align}
Then Proposition similar to that in \cite{ItkinLipton2017} can be proved.
\begin{proposition} \label{prop1}
The interpolation scheme in \eqref{linNew} is arbitrage free at the interval $[K_1,K_3]$.
\end{proposition}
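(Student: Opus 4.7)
The plan is to recognize that the proposed scheme \eqref{linNew} is exactly \eqref{lin2} specialised to the change of variable $f(x) = x^2$, after which the statement follows from the criterion already worked out by the authors (following \cite{ItkinLipton2017}) in the paragraph immediately after \eqref{lin2}. Concretely, I would first substitute the stated $\gamma_1, \gamma_2$ into $P_F(x) = \gamma_1 + \gamma_2 x^2$ and collect terms to obtain
\[
P_F(x) = \frac{P(x_1)(x_3^2 - x^2) + P(x_3)(x^2 - x_1^2)}{x_3^2 - x_1^2},
\]
which is precisely the linear interpolant in the variable $f(x) = x^2$ between the data $(f(x_1), P(x_1))$ and $(f(x_3), P(x_3))$; in particular $P_F(x_1) = P(x_1)$ and $P_F(x_3) = P(x_3)$, so the endpoint values are matched.

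Next I would verify the two hypotheses required by the general argument recalled after \eqref{lin2}, namely that $f$ is strictly increasing and strictly convex on $[x_1, x_3]$. Since strikes and therefore the inverse moneyness $x = K/S$ are positive, $f'(x) = 2x > 0$ and $f''(x) = 2 > 0$ throughout $[x_1, x_3]$, so both properties hold. With this in hand, all three conditions in \eqref{noarb} follow exactly as in the authors' earlier discussion. Strict convexity of $f$ places the $x^2$-chord below the $x$-chord, yielding $P_F(x_2) = P_l(x_2) - \varepsilon$ with $\varepsilon > 0$, so the butterfly inequality collapses to $(K_3 - K_1)\varepsilon > 0$. Strict monotonicity of $f$, combined with the standing no-arbitrage relation $P(x_1) < P(x_3)$ for Put quotes, gives
\[
P(x_3) - P_F(x_2) = \bigl(P(x_3) - P(x_1)\bigr) \frac{f(x_3) - f(x_2)}{f(x_3) - f(x_1)} > 0,
\]
which is the second condition of \eqref{noarb}; the first condition $P(x_3) > 0$ is inherited directly from the market data.

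The proof is therefore essentially a one-line reduction to the general criterion already established for the scheme \eqref{lin2}; the only thing one has to keep an eye on — really the main ``obstacle'', though it is bookkeeping rather than a mathematical difficulty — is that the two input prices $P(x_1), P(x_3)$ must themselves come from an arbitrage-free panel, so that the required $P(x_1) < P(x_3)$ and $P(x_3) > 0$ are genuinely available as inputs. That is exactly the standing assumption on the market quotes in Section~\ref{SolutionInt}, so no further work is required.
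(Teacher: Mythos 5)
Your proof is correct, and it takes a genuinely different route from the paper's. The paper proves Proposition~\ref{prop1} by observing that the three discrete conditions in \eqref{noarb} are discretizations of the pointwise conditions $P>0$, $P_x>0$, $P_{x,x}>0$, and then differentiating the quadratic $P_F(x)=\gamma_1+\gamma_2 x^2$ directly: $P_F'=2\gamma_2 x$ and $P_F''=2\gamma_2$, so the last two conditions reduce to $\gamma_2>0$, which holds precisely because $x_1<x_3$ and the input quotes satisfy $P(x_1)<P(x_3)$. You instead recognize that \eqref{linNew} is exactly \eqref{lin2} specialised to the convex, increasing change of variable $f(x)=x^2$, verify the two hypotheses ($f'=2x>0$, $f''=2>0$ on $x>0$), and then invoke the general criterion already established in the paragraph following \eqref{lin2}. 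Both arguments ultimately rest on the same standing hypothesis that the two input quotes are themselves increasing and positive, which you flag explicitly. Your reduction buys economy and conceptual clarity---once the $f=x^2$ identification is made there is essentially nothing left to compute---whereas the paper's direct differentiation is self-contained and, by exhibiting $P_F>0$, $P_F'>0$, $P_F''>0$ throughout $[x_1,x_3]$, makes it a little more transparent that \emph{every} triple of points in the interval (not only the original $(x_1,x_2,x_3)$) is arbitrage-free. As written, your argument proves the butterfly inequality only for the specific triple with endpoints $x_1,x_3$; to match the strength of the proposition's claim on the whole interval you would add the one extra sentence that $\gamma_2>0$ makes $P_F$ convex and increasing on $[x_1,x_3]$, after which every interior triple inherits the conditions. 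This is a trivial addition, so the proof is essentially complete.
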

\begin{proof}
Observe, that the no-arbitrage conditions in \eqref{noarb} are discrete versions of the conditions
\[ P > 0, \quad P_K > 0, \quad P_{K,K} > 0, \]
They, in turn, correspond to the conditions
\[ P > 0, \quad P_x > 0, \quad P_{x,x} > 0, \]
\noindent as $x'(K) = 1/S > 0$. By differentiating the first line of \eqref{linNew} one can check that the proposed interpolation obeys these conditions provided that $P$ is an increasing function of $K$ (or $x$) given the values of all other parameters to be constant. For instance, this is the case for the Black-Scholes Puts.
\bs
\end{proof}

As by definition $z$ is a linear function of $x$, a similar interpolation scheme can be used in the $z$ space, with a similar proof of no-arbitrage.

\subsection{No-arbitrage at consecutive intervals}

Proposition~\ref{prop1} guarantees that the proposed interpolation doesn't introduce an arbitrage into the solution if any three strikes belong to the same interval $[K_1,K_3]$. However, what if we consider strikes $K_2, K_3, K_4$ as this is schematically  depicted in Fig.~\ref{str2interv}.
\begin{figure}[H]
\begin{center}
\begin{tikzpicture}[line/.style={<->},thick, framed, scale=1.4,
redline/.style={shape=rectangle, draw=red, line width=2},
blueline/.style={shape=rectangle, draw=blue, line width=2}
]

\draw[->] (-3.0,0) -- (5,0) node[right] {$K$};
\draw[->] (-3.0,-0.2) -- (-3.0,4.5) node[right] {$P(K)$};
\draw[red,ultra thick] (-2,1) parabola (4.5,3.5);
\node at (-3.0,-0.3) {$0$};

\node at (-2,-0.3) {$K_1$};
\node at (-2,1.3) {$P_1$};
\node at (-2,1.) {$\bullet$};
\draw[red, dashed] (-2,0) -- (-2,1.);

\node at (-0,-0.3) {$K_2$};
\node at (0.2,0.6) {$P_2$};
\node at (-0,0.8) {$\bullet$};
\draw[red, dashed] (-0,0) -- (-0,0.8);

\node at (1.5,-0.3) {$K_3$};
\node at (1.5,2.) {$P_3$};
\node at (1.5,1.7) {$\bullet$};
\draw[red, dashed] (1.5,0) -- (1.5,1.7);

\node at (2.5,-0.3) {$K_4$};
\node at (2.7,1.5) {$P_4$};
\node at (2.5,1.65) {$\bullet$};
\draw[red, dashed] (2.5,0) -- (2.5,1.65);

\node at (4.5,-0.3) {$K_5$};
\node at (4.5,3.8) {$P_5$};
\node at (4.5,3.5) {$\bullet$};
\draw[red, dashed] (4.5,0) -- (4.5,3.5);

\draw[blue,ultra thick] (-2,1.) parabola bend (-0.2, 0.8) (1.5,1.7);
\draw[blue,ultra thick] (1.5,1.7) parabola bend (2., 1.6) (4.5,3.5);

\matrix [draw,below left] at (1.5, 4.5) {
  \node [redline,label=right:Exact solution] {}; \\
  \node [blueline,label=right:Interpolation] {}; \\
};

\end{tikzpicture}
\end{center}
\caption{Three strikes $K_2, K_3, K_4$ which belong to the consecutive intervals.}
\label{str2interv}
\end{figure}
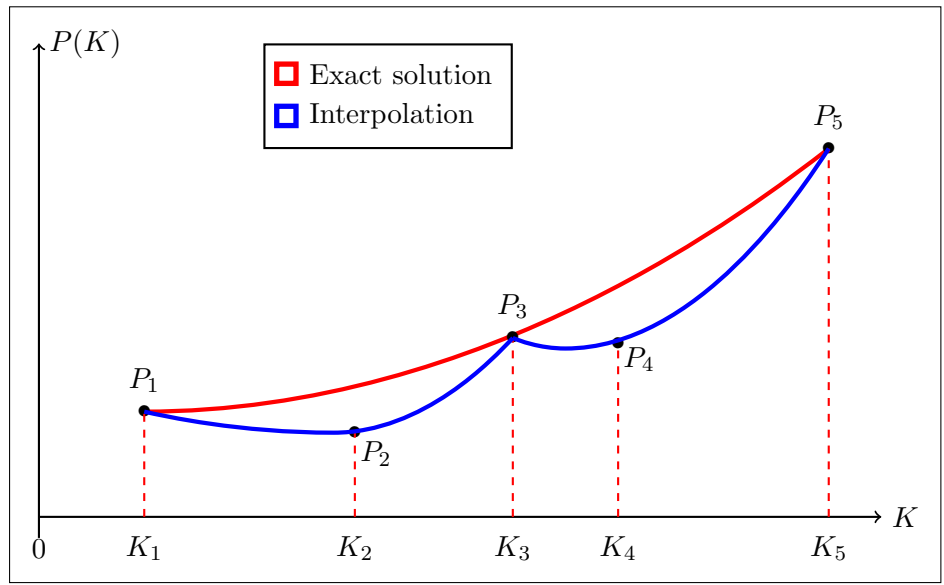

Here at the interval $[K_1,K_3]$ the exact solution is depicted by the red line, while our quadratic interpolation is in blue. Accordingly, the Put prices $P_1,P_3$ are the market quotes, so they assumed to be the exact prices with no market arbitrage. By our construction, these prices also don't have a model arbitrage. At the consecutive interval $[K_3,K_5]$ a similar construction applies.

We have to emphasize that this graph is pure illustrative, and no-arbitrage interpolation guarantees  that $P'(K) > 0$, while the blue line in Fig.~\ref{str2interv} doesn't support this. However, if we draw an accurate picture by using the above formulae, it would be almost impossible to distinguish the red and blue lines. Therefore, we changed convexity and skew of the blue line to make the difference visible.

By Proposition~\ref{prop1} given a set of strikes $K_1, K_2, K_3$ the price $P_2$ obtained by interpolation preserves no-arbitrage. The same is true for $P_5$ given the Put prices $P_3, P_5$ at strikes $K_3, K_5$.
Now assume that given $K_1, K_3, K_5$ and $P_1, P_3, P_5$ we want to check the no-arbitrage conditions for the set of strikes $K_2, K_3, K_4$. The Proposition~\ref{prop1} doesn't help in this situation, so we need a special consideration of this case.

Obviously, the first and second conditions in \eqref{noarb} are still satisfied in this case, so we need to check that the butterfly spread is positive. Unfortunately, at the moment we don't have a general analytical solution of this problem, while some particular cases can be addressed. Thus this remains an open question. However, we checked this condition numerically. In doing so we used the Black-Scholes Put prices $P_1, P_3, P_5$\footnote{This is done to preserve upper bounds on the Put price that $P(S, K,T,r) \le K e^{-rT}$, \cite{Levy1985}.} and built a 2D plot of $Bs$ which is the left-hands side of the third line in \eqref{noarb}. The results for two cases presented in Table~\ref{tabCases} are presented in Fig.~\ref{intCase1}, \ref{intCase2}.

\begin{table}[H]
\begin{center}
\begin{tabular}{|r|r|r|r|r|r|r|r|}
\hline
Test & $S$ & $r$ & $\sigma_{BS}$ & $T$ & $K_1$ & $K_3$ & $K_5$ \\
\hline
1 & 100 & 0.01 & 0.5 & 2 & 80 & 100 & 130 \\
\hline
2 & 100 & 0.1 & 0.1 & 0.1 & 90 & 100 & 105 \\
\hline
\end{tabular}
\caption{Parameters of the test for non-negativity of the Butterfly spread. $\sigma_{BS}$ is the Black-Scholes implied volatility.}
\label{tabCases}
\end{center}
\end{table}

Overall, we ran a lot of tests and didn't find any case where the butterfly spread would become negative. This partly supports our no-arbitrage interpolation. More sophisticated cases where, e.g., instead of strike $K_3$ in the butterfly spread at strikes $K_2, K_3, K_4$ we use another strike $K_6$ such that $K_1 < K_2 < K_3 < K_4 < K_6 < K_5$, could be treated in a similar way. Again, our numerical tests didn't reveal any case where a butterfly spread would become negative.

\begin{figure}[h!]
\begin{minipage}{0.46\linewidth}
\begin{center}
\fbox{\includegraphics[width=\linewidth]{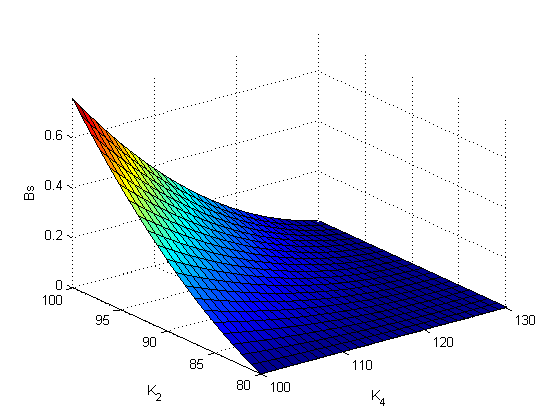}}
\caption{Butterfly spread $Bs$ for a set of strikes $K_2, K_3, K_4$ computed in Test 1 in Table~\ref{tabCases}.}
\label{intCase1}
\end{center}
\end{minipage}
\hspace{0.04\linewidth}
\begin{minipage}{0.46\linewidth}
\begin{center}
\fbox{\includegraphics[width=\linewidth]{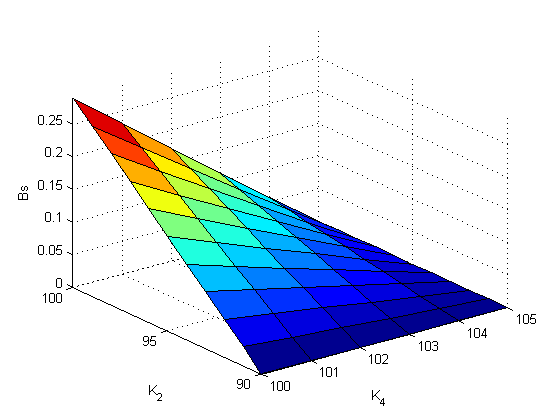}}
\caption{Butterfly spread $Bs$ for a set of strikes $K_2, K_3, K_4$ computed in Test 2 in Table~\ref{tabCases}.}
\label{intCase2}
\end{center}
\end{minipage}
\end{figure}

\subsection{Computing the integrals in \eqref{solInhom} far from $z = 0$}

Using the interpolation scheme proposed in above, consider the first integral in \eqref{solInhom}.  To remind, we compute it at some interval $z \in [z_i,z_{i+1}], \  \irg{i}{1}{n_j}$. Picking together the solutions in \eqref{Kummer0} with the interpolation scheme for $P(z,T_{j-1})$ and Wronskians in \eqref{Kummer0}, and substituting them into the first integral in \eqref{solInhom} we obtain
\begin{align} \label{Int1}
\int & \dfrac{ y_2(z) f(z,T_{j-1})}{W z}d z = A \Big[-B_0
+ B_1 M(-2-q_1, -1-q_2, -z)
+ B_2 M(-1-q_1, -q_2, -z) \\
&+ B_3 M(-q_1, 1-q_2, -z)\Big], \nonumber \\
A &= \dfrac{1}{b_1^2 q_1} \pi(1-q_2) \csc(\pi q_2), \nonumber \\
B_0 &= \frac{1}{a_2^2 (q_1+1) (q_1+2)} \Big[
a_2 b_1 (q_1+2) \left(a_2^2 \beta q_2 - b_1 (q_1+1) (\beta  b_2-a_2 \gamma_1)
\right) \nonumber \\
&+ \gamma_2 \left(2 a_2^4 q_2 (q_2+1)-2 a_2^2 b_1 b_2 (q_1+2)q_2 + b_1^2 b_2^2 (q_1+1) (q_1+2) \right) \Big],
\nonumber \\
B_1 &= 2 a_2^2 \gamma_2 \frac{q_2 (q_2+1)}{(1+q_1)(2+q_1)}, \nonumber \\
B_2 &= (a_2 b_1 \beta - 2 b_1 b_2 \gamma_2 + 2 a_2^2 \gamma_2 z)  \frac{q_2}{1+q_1}, \nonumber \\
B_3 &= \frac{1}{a_2^2}
\left[a_2 b_1\left(a_2^2 \beta  z + a_2 b_1 \gamma_1 - \beta b_1 b_2 \right)
 + \gamma_2 \left(b_1 b_2 - a_2^2 z\right)^2\right]. \nonumber
\end{align}
Similarly
\begin{align} \label{Int2}
\int & \dfrac{ y_1(z) f(z,T_{j-1})}{W z}d z = \bar{A} \Big[
\bar{B}_1 M(q_2-q_1, 1+q_2, -z)
+ \bar{B}_2 \ \pFq{2}{2}{q_2-q_1, 1+q_2; q_2,2+q_2; -z} \nonumber \\
&+ \bar{B}_3 \ \pFq{2}{2}{q_2-q_1, 2+q_2; q_2,3+q_2; -z} \Big], \\
A &= \pi z^{q_2} \csc(\pi q_2) \Gamma(q_2), \nonumber \\
B_1 &=
\frac{a_2^2 \gamma_1 - a_2 \beta  b_2 + b_2^2 \gamma_2}{a_2^2 \Gamma(1+q_2)}, \nonumber \\
B_2 &=\frac{\Gamma(q_2+1) }{\Gamma(q_2) \Gamma(2+q_2) b_1} (a_2 \beta -2 b_2 \gamma_2) z, \nonumber \\
B_3 &= \frac{\Gamma(q_2+1) }{\Gamma(q_2) \Gamma(3+q_2) b_1^2}
a_2^2(1+q_2)\gamma_2 z^2, \nonumber
\end{align}
\noindent where $\pFq{p}{q}{{a_1,...,a_p}; {b_1,...,b_q}; z}$ is the generalized hypergeometric function, \cite{Olver1997}.

\subsection{Computing the integrals in \eqref{solInhom} far from $z = \pm \infty$}

Here we proceed in the same way as in the previous section. Again, we pick together the solutions in \eqref{Kummer1} with the interpolation scheme for $P(z,T_{j-1})$ and Wronskians in \eqref{Kummer1}, and substitute them into the first integral in \eqref{solInhom} we obtain
\begin{align} \label{Int11}
\int & \dfrac{ y_2(z) f(z,T_{j-1})}{W z}d z = (-1)^{q_2 - q_1} [C_0 J_0 + C_1 J_1 + C_2 J_2], \\
J_i &= \int z^i U(1-q_1, 2-q_2, -z) dz, \nonumber \\
C_0 &= \frac{b_2^2 \gamma_2}{a_2^2} - \frac{\beta b_2}{a_2} + \gamma_1, \quad
C_1 = \frac{a_2 \beta - 2 b_2 \gamma_2}{b_1}, \quad
C_2 = \frac{a_2^2 \gamma_2}{b_1^2}. \nonumber
\end{align}
It is known, \cite{as64}, that
\begin{equation*}
J_0 = - \frac{1}{q_1}U(-q_1, 1-q_2, -z).
\end{equation*}
Then, $J_1, J_2$ can be found using integration by parts to yield
\begin{align*}
J_1 &= z J_0 + \frac{1}{q_1(1+q_1)}U(-1-q_1, -q_2, -z), \\
J_2 &= z J_1 + \frac{1}{q_1(2+3 q_1 + q_1^2)}U(-2-q_1, -1-q_2, -z)
 - z\frac{1}{q_1(1 + q_1)}U(-1-q_1,-q_2,-z).
\end{align*}
Similarly
\begin{align} \label{Int21}
\int & \dfrac{ y_1(z) f(z,T_{j-1})}{W z}d z = (-1)^{q_2 - q_1} [C_0
{\mathcal J}_0 + C_1 {\mathcal J}_1 + C_2 {\mathcal J}_2], \\
{\cal J}_i &= \int z^i e^{-z} U(1+q_1-q_2, 2-q_2, z) dz. \nonumber
\end{align}
The integrals ${\mathcal J}_i $ have been considered in \cite{ItkinLipton2017} using the approach of \cite{kummerInt1970}. Borrowing from there the result
\begin{equation*}
{\mathcal J}_0 = \int e^{-z} U(1+q_1-q_2,2-q_2,z) d z = -e^{-z} U(q_1-q_2,1-q_2,z),
\end{equation*}
\noindent and using integration by parts, we obtain
\begin{align*}
{\mathcal J}_1 &= z {\mathcal J}_0 + e^{-z}U(q_1-q_2-1, -1-q_2, z), \\
{\mathcal J}_2 &= z {\mathcal J}_1 - \int  {\mathcal J}_1 dz = (z-1) {\mathcal J}_1 - \int e^{-z}U(q_1-q_2-1, -1-q_2, z) dz \\
&= (z-1) {\mathcal J}_1 + e^{-z}U(q_1-q_2-2, -2-q_2, z).
\end{align*}

\subsection{Some additional notes}

Based on the no-arbitrage interpolation and some analytics proposed in this Section, we managed to find the solution \eqref{solInhom} of the forward equation  \eqref{finODE} in closed form. This solution by construction is arbitrage free at any interval where the local variance function defined in \eqref{vDef} is linear. In other words we proved, that if we consider, say 3 strikes $0 < K_1 < K_2 < K_3 < \infty$ such that, e.g., $x_1  = K_1/S \in [x_i, x_{i+1}]$,  $x_2  = K_2/S \in [x_i, x_{i+1}]$,  $x_3  = K_3/S \in [x_{i}, x_{i+1}]$, then the solution at these 3 points obeys no-arbitrage conditions.

\section{Calibration of smile for a given term $T_i$} \label{calib}

Calibration problem for the local volatility model can be formulated as follows: given market quotes of Call and/or Put options corresponding to a set of $N$ strikes $\{K\}:= K_j, \ j \in [1,N]$ and same maturity $T_i$, find the local variance function $\sigma(K)$ such that these quotes solve equations in \eqref{finDupPutI}, \eqref{finDupCallI}.

As mentioned in \cite{ItkinLipton2017}, there are two main approaches to solving this problem. The first approach attempts to construct a continuous implied volatility (IV) surface matching the market quotes by using either some parametric or non-parametric regression, and then generates the corresponding LV surface via the well-known relationship between the local and implied variances also known as the Dupire formula, see, e.g., \cite{ItkinSigmoid2015} and references therein. To be practically useful, this construction should guarantee no arbitrage for all strikes and maturities, which is a serious challenge for any model based on interpolation. If the no-arbitrage condition is satisfied, then the LV surface can be calculated using the Dupire formula. The second approach relies on the direct solution of the corresponding forward equation (which is the Dupire equation in the Black-Scholes world, or \eqref{finDupCallI}, \eqref{finDupPutI} in our model) using either analytical or numerical methods. The advantage of this approach is that it guarantees no-arbitrage. However, the problem of the direct solution can be ill-posed, \cite{Coleman2001}, and is rather computationally intensive.

In this Section we show that the second approach could be significantly simplified when using the ELVG model, so calibration of the smile could be done very fast and accurate.

Further, for the sake of certainty, suppose that all known market quotes are Puts, despite this can be easily relaxed. Also, suppose that the shape of a local variance is given by some function $\sigma_j(K) = f_j(K, p_1,\ldots,p_L)$, where $p_1,\ldots,p_L$ is a set of the model parameters to be determined. For instance, in \cite{LiptonSepp2011, CarrNadtochiy2017} the local variance is assumed to be a piecewise constant function of strike, while in \cite{ItkinLipton2017} this is a piecewise linear function of strike.

In this paper we also assume the local variance to be a piecewise linear function of strike. Moreover, for our model we obtained a closed form representation of the Put option prices via parameters of the model given in Sections~\ref{SolutionODE},\ref{SolutionInt}. Therefore, calibration of the model to the given set of smiles could be provided as follows. First, using the above-mentioned closed form solution for a fixed interval in $x$ where parameters of the model are constant, we construct the combined solution valid for all $x \in \mathbb{R}^+$. At the second step, the parameters of the local variance function $v^0_{j,i}, v^1_{j,i}$ can be found together with the integration constants $C_1, C_2$ in \eqref{solInhom} by solving a system of non-linear algebraic equations.

\subsection{The combined solution in $x \in \mathbb{R}^+$ \label{wholeSol}}

Suppose that the Put prices for $T=T_j$ are known for $n_j$ ordered strikes. The location of these strikes on the $x$ line is schematically depicted in Fig.~\ref{Fig3}.
\begin{figure}[H]
\begin{center}
\begin{tikzpicture}[line/.style={<->},thick, framed, scale=1.4]

\draw[->] (-3.0,0) -- (5,0) node[right] {$x$};
\draw[->] (-3.0,-0.2) -- (-3.0,3.5) node[right] {$v(x)$};
\draw[red,ultra thick] (0.5,0.5) parabola (4.5,1.5);
\draw[red,ultra thick] (0.5,0.5) parabola (-2,3);
\node at (-3.0,-0.3) {$0$};
\node at (-2,-0.3) {$x_1$};
\node at (-2,3.) {$\bullet$};
\draw[red, dashed] (-2,0) -- (-2,3.);
\node at (-2.5,0.5) {$B_1$};
\draw[blue, dashed] (-2,3.) -- (-2.8,3.2);
\node at (-1,-0.3) {$x_2$};
\node at (-1,1.4) {$\bullet$};
\draw[red, dashed] (-1,0) -- (-1,1.4);
\node at (-1.5,0.5) {$B_{12}$};
\draw[blue, dashed] (-2,3.) -- (-1,1.4);
\node at (1.5,-0.3) {$x_3$};
\node at (1.5,0.55) {$\bullet$};
\draw[red, dashed] (1.5,0) -- (1.5,0.55);
\node at (-0.5,0.5) {$B_{23}$};
\draw[blue, dashed] (-1,1.4) -- (1.5,0.55);
\node at (2.5,-0.3) {$\ldots$};
\node at (3.5,-0.3) {$x_{n_j}$};
\node at (3.5,1.05) {$\bullet$};
\draw[red, dashed] (3.5,0) -- (3.5,1.05);
\draw[blue, dashed] (1.5,0.55) -- (3.5,1.05);
\draw[blue, dashed] (3.5,1.05) -- (4.5,2.);
\node at (4,0.5) {$B_{n_j}$};
\node at (0.5,1.3) {$2$};
\node at (0.7,0.3) {$1$};

\end{tikzpicture}
\end{center}
\caption{Construction of the combined solution in $x \in \mathbb{R}^+$: 1 (red solid line - the real (unknown) local variance curve, 2 (dashed blue line) - a piecewise linear solution.}
\label{Fig3}
\end{figure}
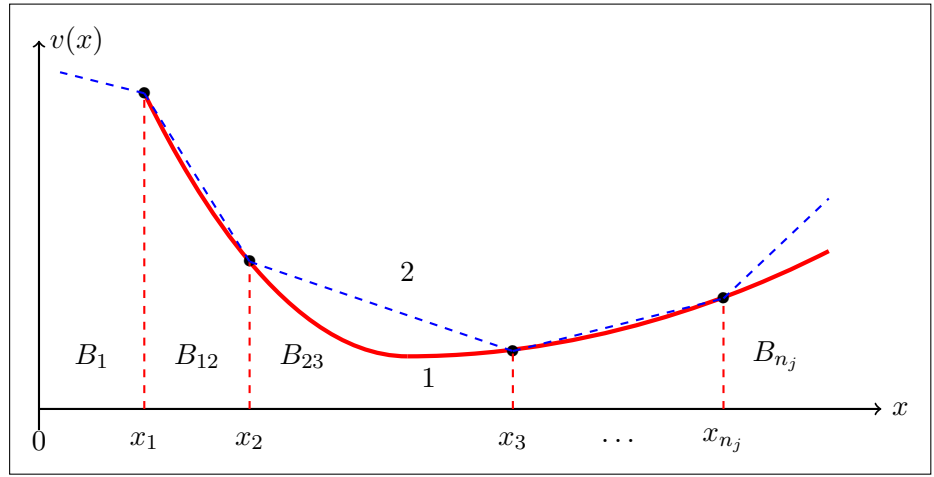

Recall, that the Put prices are given by \eqref{solInhom}, which in a more convenient form at the interval $x_{i-1} \le x \le x_i$ and at $T=T_j$ can be represented as
\begin{align} \label{combPut}
P_i(x) &= \CO{i}{j} {\cal J}_1(q_1, q_2, z)
+ \CT{i}{j} {\cal J}_2(q_1, q_2, z) + \frac{1}{b_1}I_{12}(z) + \disD_j K, \\
z &\equiv (b_2 + a_2 x)b_1/a_2^2 = (v_{j,i}^0 + v_{j,i}^1 x)b_1/a_2^2. \nonumber
\end{align}
Here, for consistency we change notation of two integration constants which belong to the $i$-th interval in $x$ and $j$-th maturity to $\CO{i}{j},\CT{i}{j}$.

For the open interval $B_1$ in Fig.~\ref{Fig3}, since function $K_\nu(z)$ diverges when  $z \to 0$, we have to put $\CO{1}{j} = 0$ as the boundary condition\footnote{
Actually, since $x \to 0$ implies $z = v \to b_2$, so $b_2$ should be non-negative, $b_2 \ge 0$. Therefore, the only case when $z \to 0$ at $x \to 0$ is when $b_2 = 0$.}. Therefore, \eqref{combPut} contains just one yet unknown constant $\CT{1}{j}$. For the closed intervals $x \in [x_{i-1}, x_{i}], \ i \in [2,n_j]$ the solutions in \eqref{combPut} have two yet unknown constants $\CO{i}{j}, \CT{i}{j}$, since $x$ is finite on the corresponding intervals, and both solutions $y_1(x), y_2(x)$ are well-behaved. Finally, for the interval $x \in [x_{n_j}, \infty)$, according to the boundary conditions in \eqref{bc} we must set $C^{(2)}_{2,n_j+1} = 0$.

Rigorously speaking, we also have to show that in the limits $x \to 0$ and $x \to \infty$ the source term $I_{12}(z)$ in \eqref{solInhom} also vanishes. This could be done similar to Proposition~2 in \cite{ItkinLipton2017}.

Thus, we have $2 n_j$ unknown constants to be determined. Since the local volatility function $v_i$ is continuous at the points $x_i, \ i=1,\ldots,n_j$, so should be the Put options prices $P(x,T_j)$. Therefore, we require that at the points $x_i, \ i = 1,...,n_j$ the solution for Puts and its first derivative in $x$ should be a continuous function of $x$. Thus, if the local variance function is known, the above constants solve a system of $2 n_j$ algebraic equations. This system has a block-diagonal structure where each block is a 2x2 matrix. Therefore, it can be easily solved with the linear complexity $O(n_j)$.

When computing the first derivatives, we take into account that, \cite{as64}
\begin{align} \label{der}
\fp{M(a,b,z)}{z} &= \dfrac{a}{b}M(a+1,b+1,z), \quad
\fp{U(a,b,z)}{z} = -a U(a+1,b+1,z), \\
\partial_z I_{12}(z) &= \left[\frac{y'_{1}(z)}{y_{1}(z)} I_1 + \frac{y'_{2}(z)}{y_{2}(z)} I_2 \right] a_2. \nonumber
\end{align}
Therefore, computing the derivatives of the solution doesn't cause any new technical problem.

\subsection{Additional equations for calibration}

As we have already mentioned above, the standard way of doing calibration of the local volatility model would be that described, e.g., in \cite{ItkinLipton2017}. Namely, given the maturity $T_j$ and some initial guess of the local variance parameters $v_{j,i}^0, v_{j,i}^1, \ \forall i \in [1,n_j]$, the following steps represented in Panel~\ref{Algo} have to be achieved, e.g., in the standard least-square method,
\begin{center}
\begin{algorithm}[H]
 \KwIn{Strikes $z_i, \ i \in [1,n_j]$, Put prices  $V^{market}_i, \ i \in [1,n_j]$}
 \KwOut{$v_{j,i}^0, v_{j,i}^1, \ \forall i \in [1,n_j]$}
 \emph{{\bf Initialization}: The initial guess of $v_{j,i}^0, v_{j,i}^1, \ \forall i \in [1,n_j]$, the tolerance $\epsilon$} \;
\While{1}{
  1. Solve the system for $\CO{i}{j}, \CT{i}{j}$ \;
  2. Compute Put option prices $V(x)$\;
  3. Compute the total error $\Delta = \sum_{i=1}^{n_j} [V(x_i) - V^{market}(x_i)]^2$\;
    \eIf{ $\Delta > \epsilon$}{
        New guess for $v_{j,i}^0, v_{j,i}^1, \ \forall i \in [1,n_j]$\;
    }{
        break\;
    }
 }
\caption{Calibration of the local volatility model using a least-square method.}
\label{Algo}
\end{algorithm}
\end{center}
\vspace{0.2in}
Here $V^{market}(z_i)$ are the market Put quotes at the given strikes and maturity. Obviously, when the number of calibration parameters (strikes) is high, this algorithm is slow even if the closed form solution is known and can be used at Step~2. Things become even worse when a numerical solution at Step~2 has to be used if the closed form solution is not available.

However, in our case this tedious algorithm can be fully eliminated. Indeed, at every point $i$ in strike space, $i \in [1,n_j]$ we have four unknown variables $v_{j,i}^0, v_{j,i}^1, \CO{i}{j}, \CT{i}{j}$. We also have four equations which contain these variables, namely
\begin{align} \label{finSystem}
P_i(x)|_{x=x_i} &= P_{i+1}(x)|_{x = x_i}, \\
P_i(x)|_{x=x_i} &= P_{market}(x_i), \nonumber \\
\fp{P_{i+1}(x)}{x}\Big|_{x = x_i} &=  \fp{P_{i}(x)}{x}\Big|_{x = x_i}, \nonumber \\
v^0_{j,i} + v^1_{j,i} x_{i} &= v^0_{j,i+1} + v^1_{j,i+1} x_{i}, \quad i=1,\ldots,n_j. \nonumber
\end{align}
Also, based on \eqref{cont}, the last line in \eqref{finSystem} could be re-written as a recurrent expression
\begin{equation} \label{recurr}
v^0_{j,i} = v^0_{j,n_j} + \sum_{k=i+1}^{n_j} x_{k}(v^1_{j,k} - v^1_{j,k-1}) , \quad i=0,\ldots,n_j-1.
\end{equation}

The \eqref{finSystem} is a system of $4 n_j$ nonlinear equations with respect to $4 (n_j+1)$ variables $v_{j,i}^0$, $v_{j,i}^1$, $\CO{i}{j}$, $\CT{i}{j}$. We remind that according to the boundary conditions $\CO{1}{j} = \CT{n_j}{j} = 0$. Therefore, we need two additional conditions to provide a unique solution. For instance, often traders have an intuition about the asymptotic behavior of the volatility surface at infinity, which, according to our construction, is determined by $v^1_{j,n_j}$ and $v^1_{j,0}$.

Overall, solving the nonlinear system of equations \eqref{finSystem} provides the final solution of our problem. This can be done by using standard methods, and, thus, no any optimization procedure is necessary. However, a good initial guess still would be helpful for a better (and faster) convergence.

\subsection{Smart initial guess}

The initial guess of the solution of \eqref{combPut} can be constructed, for instance, as follows. We take advantage of the fact that according to \eqref{finODE} the local variance function $v(x)$ could be explicitly expressed as
\begin{equation} \label{localVarappr}
v(x) = \dfrac{b_1 x V_x(x) + b_0 V(x) - c}{V_{x,x}(x)}.
\end{equation}

Given maturity $T_j$ and approximating derivatives by central finite differences with the second order of approximation in step $h$ in the strike space (see, e.g. \cite{ItkinBook}), \eqref{localVarappr} can be represented in the form
\begin{align} \label{localVarappr1}
v^0_{j,i} + v^1_{j,i} x_{i} &= \dfrac{b_1 x V_x(x_i) + b_{0,j} V(x_i) - c_j}{V_{x,x}(x_i)}, \\
V_{x}(x_i) &= \alpha_{-1} V(x_{i-1}) + \alpha_{0} V(x_{i}) + \alpha_{1} V(x_{i+1}), , \nonumber \\
V_{x,x}(x_i) &= \delta_{-1} V(x_{i-1}) + \delta_{0} V(x_{i}) + \delta_{1} V(x_{i+1}), \nonumber \\
\alpha_{-1} &= - \dfrac{h_{i+1}}{h_{i}(h_{i+1} + h_{i})}, \quad
\alpha_{0} = \dfrac{h_{i+1}-h_{i}}{h_{i+1}h_{i}}, \quad
\alpha_{1} = \dfrac{h_{i}}{h_{i+1}(h_{i+1} + h_{i})}. \nonumber \\
\delta_{-1} &=  \dfrac{2}{h_{i}(h_{i+1} + h_{i})}, \quad
\delta_{0} = -\dfrac{2}{h_{i+1}h_{i}}, \quad
\delta_{1} = \dfrac{2}{h_{i+1}(h_{i+1} + h_{i})}. \nonumber \\
h_i &= x_i-x_{i-1}, \quad i \in [1,n_j]. \nonumber
\end{align}
Further, associating Put prices $P(S,T_j,x_{i})$ with the given market quotes, the right hands side of the first line in \eqref{localVarappr1} can be found explicitly. This then can be combined with the last line of \eqref{finSystem} to produce a system of $2(n_j-1)$ equations for $v^1_{j,i}$ and $v^1_{j,i}, \ i \in [1,n_j]$. Finally, we take into account the asymptotic behavior of the volatility surface in $x$ at zero and infinity, which, according to our construction, is determined by $v^1_{j,n_j}$ and $v^1_{j,0}$ and is assumed to be known. Thus, we obtain a closed system of $2(n_j-1)$ linear equations with a banded matrix which can be easily solved with a linear complexity. This provides an explicit representation of the local variance function over the whole set of intervals in the strike space determined according to our approximation where the continuous derivatives are replace by finite differences.

Note, that at the first and last strike intervals the approximation of the first and second derivatives by central finite differences should be replaced by one-sided approximations, in more detail see \cite{ItkinBook}, chapter 2.

It could also happen that at some strikes this solution (the smart guess) gives rise to a negative local variance. In such a case we do another step which is a kind of smoothing. Namely, we exclude from the initial guess all values where the local variance is negative and using the remaining points create a spline. Then the negative values in the initial guess are replaced by those given by the constructed spline.

The final step utilizes the exact representation \eqref{combPut} of the Put price in the ELVG model. As the variance function is already known from the previous step, this equation contains two yet unknown constants $\CO{i}{j},\CT{i}{j}$. Accordingly, they can be found by solving the system of 2 linear equations represented by the first and third  lines of \eqref{finSystem}. Then, after this last step is complete, all unknown variables are determined, and thus found solution could be used as an educated initial guess for solving \eqref{finSystem} numerically.

\section{Asymptotic solutions} \label{asympt}

In many practical situations either some coefficients $a_2 = v^1_{j,i}$, or both
$b_2 = v^0_{j,i}, \ a_2 = v^1_{j,i}$ in \eqref{Laplace2} are small. Of course, in that case the general solution \eqref{combPut} remains valid. However, in this case when computing the values of Kummer functions numerically, numerical errors significantly grow. This is especially pronounced when computing the integral $I_{12}$. The main point is that either the Kummer function takes a very small value, and then the constants $\CO{i}{j},\CT{i}{j}$ should be big to compensate, or vice versa. Resolution of this requires a high-precision arithmetics, and, which is more important, taking many terms in a series representation of the Kummer functions, which significantly slows down the total performance of the method.

On the other hand, to eliminate these problems we can look for asymptotic solutions of \eqref{Laplace2} taking into account the existence of small parameters from the very beginning. This approach was successfully elaborated on in \cite{ItkinLipton2017}, and below we proceed in a similar spirit.

\subsection{Small $a_2$}

We can build the solution of \eqref{Laplace3} directly using an independent variable $x$ (so not switching to the variable $z$). We represent it as a series on the small parameter $a_2$, i.e.
\begin{equation} \label{ser}
V(x) = \sum_{i=0}^\infty a_2^i V_i(x).
\end{equation}
In the zero-order approximation by plugging \eqref{ser} into \eqref{Laplace3} and neglecting by terms proportional to $a_2 \ll 1$ we obtain the following equation for $V_0(x)$
\begin{equation} \label{asymA1}
-b_2 V_{xx}(x) + b_1 x V_x(x) + b_0 V(x) = c.
\end{equation}
This equation is simpler than \eqref{combPut}. Still, its solution is given by a general formula
\begin{equation*}
V(x) = C_1 y_1(x) + C_2 y_2(x) + I_{12}(x),
\end{equation*}
\noindent but the fundamental solutions $y_1(x), y_2(x)$ now read
\begin{equation*}
y_1(x) = {\mathcal H} \left(-\frac{b_0}{b_1}, \sqrt{\frac{b_1}{2 b_2}} x\right), \qquad
y_2(x) = M\left(\frac{b_0}{2 b_1}, \frac{1}{2}, \frac{b_1}{2 b_2}x^2\right), \end{equation*}
\noindent where ${\mathcal H}(a, x), \ a,x \in \mathbb{R}$ is the generalized Hermite polynomial $H_a(x)$, \cite{as64}.

\subsection{Small $|z|$}

Based on the definition of $z = (b_2 + a_2 x)b_1/a_2^2$, this could occur in two cases: either at some finite interval in the strike space $|a_2| \gg |b_1 x|, \ |a_2| \gg |b_2|$, or just $z$ is small, so $b_2$ and $a_2$ have the opposite signs.  In any case we have a small parameter under the high-order derivative. This equation belongs to the class of singularly perturbed differential equations, \cite{Wasow1987}. It can be solved by using either the method of matching asymptotic expansions, \cite{Nayfeh2000}, or the method of boundary functions, \cite{VasBut1995}. The latter was used in \cite{ItkinLipton2017} in a similar situation, so for further details we refer a reader to that paper.

However, we can partly eliminate this by constructing solutions of \eqref{Laplace2} using the original variable $x$. Then we have to consider various cases where instead of a small parameter $z$ some other combinations of parameters could be small or large. But if so, a general solution as a function of the original independent variable $x$ could be represented as regular series on the new small parameter. Then, truncating the series, one gets a simplified solution.

To make it more transparent let us represent the general solution of \eqref{Laplace2} expressed in variable $x$, rather than in $z$, as this was done in \eqref{solInhom}
\begin{align} \label{combPutX}
V(x) &= \CO{i}{j} y_1(x) + \CT{i}{j} y_2 (x) + I_{12}(x), \\
y_i(x) &= a_2^k (b_2+a_2 x)^k {\mathcal V}_i\left(
-1 - \frac{b_0}{b_1} + \frac{b_1 b_2}{a_2^2}, 2 - \frac{b_1 b_2}{a_2^2},
\frac{b_1}{a_2^2}(b_2 + a_2 x)\right), \quad i=1,2.\nonumber \\
k &= 1 - \frac{b_1 b_2}{a_2^2}. \nonumber
\end{align}
Observe, that based on the definition of $b_1$ in \cref{finODE}, $b_1 \approx (r-q)\Delta T$, so usually small. Therefore, small $z$ doesn't mean that
$w$ is necessarily small. Below we consider two cases.

\subsubsection{$w \ll 1$}

As $|z| \ll 1$ and $w \ll 1$ we have $w \ll |a_2^2/b_1|$. So $a_2 \ge \sqrt{b_1}$.
In this case $w \ll 1$ is an actual small argument. Therefore, the general solution \eqref{combPutX} can be expanded into series on small $w$. The condition $0 < w \ll 1$ implies that $a_2$ and $b_2$ have the opposite signs.  If $a_2 > 0$ (and so $b_2 < 0$), then in the zero-order approximation we obtain
\begin{align} \label{asymZ}
y_1(w) &= (a_2 w)^{k-1} \left[\dfrac{\Gamma(-k)}{\Gamma(b_0/b_1)} a_2 w + O(w^2) \right] - \left(\frac{b_1}{a_2^3}\right)^{1-k} \frac{\Gamma(k-1)}{b_1\Gamma(k + b_0/b1)}(a_2 b_1 b_2 - b_0 a_2 w) + O(w^2) , \nonumber \\
y_2(w) &= (a_2 w)^{k-1} \left[a_2 w + O(w^2)\right].
\end{align}
As $b_1 > 0$ we have $k-1 > 0$.

If $a_2 < 0$ and $b_2 > 0$, then both RHS in \eqref{asymZ} should be multiplied by a factor
$\exp(-2 i \pi  b_1 b_2/a_2^2)$.

\subsubsection{$a^2_2 \gg |b_1 w|$}

In this case we can also expand the solution in \eqref{combPutX} into series on small $z$ to obtain
\begin{align} \label{asym1}
y_1(z) &= \frac{1}{\Gamma(1 + q_1 - q_2)}\left[\Gamma(1 - q_2) - q_1 \Gamma(- q_2) z\right] + z^{-q_2} \left[\frac{\Gamma(q_2-1)}{\Gamma(q_1)}z + O(z^2)\right] + O(z^2), \\
y_2(w) &= 1 + \frac{q_1}{q_2}z + O(z^2). \nonumber
\end{align}
Note, that based on the definition $q_2 = b_2 b_1/a_2^2$, at large $a_2$ the coefficient $q_2$ could also be small. But $z/q_2 = 1 + a_2 x/b_2 = w/b_2 = O(1)$.

\section{Numerical experiments} \label{numExp}

In our numerical test we use the same data set as in \cite{ItkinSigmoid2015, ItkinLipton2017}. This is done first, to compare performance and a quality of the fit for all those models. Also, we already know that these smiles are difficult to fit precisely, see discussions in \cite{ItkinSigmoid2015, ItkinLipton2017}.

To remind, we take data from \url{http://www.optionseducation.org} on XLF traded at NYSEArca on March 25, 2014. The spot price of the index is $S = 22.64$, and $r = 0.0148,\  q=0.01$. The option implied volatilities ($I_{call}, I_{put}$) are given in Tables~\ref{TabOptC},\ref{TabOptP}. We take all OTM quotes and some ITM quotes which are
very close to the at-the-money (ATM). When strikes for Calls and Puts coincide, we take an average of $I_{call}$ and $I_{put}$ with weights proportional to $1 - |\Delta|_c$ and $1-|\Delta|_p$ respectively, where $\Delta_c, \Delta_p$ are option Call and Put deltas~\footnote{By doing so we do take into account effects reported in \cite{callPutIV}, who pointed out that the IVs calculated from  Call and  Put option prices corresponding to the same strike  do not coincide,  although  they  should  be  equal  in theory. Our weights are chosen according to a pure empirical rule of thumb, and a more detailed investigation of this effect is required.}.
\begin{sidewaystable*}[!htb]
\begin{center}
\footnotesize

\caption{XLF implied volatilities for the Put options.}
\label{TabOptC}

\begin{tabular}{|r|r|r|r|r|r|r|r|r|r|r|r|r|r|r|r|r|r|}
\hline
    \multicolumn{1}{|c|}{\multirow{2}[4]{*}{T}} & \multicolumn{17}{c|}{K,Put} \\
\cline{2-18}          & 10    & 11    & 12    & 13    & 14    & 15    & 16    & 17    & 18    & 19    & 19.5  & 20    & 20.5  & 21    & 21.5  & 22    & 23 \\
\hline
    4/4/2014 &       &       &       &       &       &       &       &       &       &       &       & 39.53 &       & 23.77 & 19.73 & 16.67 &  \\
\hline
    4/11/2014 &       &       &       &       &       &       &       &       &       &       & 35.89 & 30.33 & 26.62 & 22.06 & 18.49 & 16.11 &  \\
\hline
    4/19/2014 &       &       &       &       &       &       &       &       &       & 32.90 &       & 26.79 &       & 20.14 &       & 15.19 & 12.93 \\
\hline
    5/17/2014 &       &       &       &       &       &       &       & 37.66 & 33.27 & 26.88 &       & 23.08 &       & 18.94 &       & 16.12 & 13.86 \\
\hline
    6/21/2014 &       &       &       &       &       & 40.51 & 37.21 & 31.41 & 27.84 & 23.90 &       & 21.07 &       & 18.88 &       & 16.95 & 15.82 \\
\hline
    7/19/2014 &       &       &       &       &       & 36.71 & 33.35 & 29.96 & 26.09 & 22.81 &       & 20.29 &       & 18.13 &       & 16.30 & 14.93 \\
\hline
    12/20/2014 &       &       &       &       & 31.98 & 29.38 & 27.21 & 25.30 & 23.75 & 22.09 &       & 20.67 &       & 19.44 &       & 18.36 & 17.60 \\
\hline
    1/17/2015 & 42.75 & 38.79 & 35.60 & 33.26 & 30.94 & 28.82 & 26.52 & 24.96 & 23.12 & 21.67 &       & 20.29 &       & 19.10 &       & 17.90 & 18.07 \\
\hline
\end{tabular}%

\bigskip
\bigskip

\caption{XLF implied volatilities for the Put options.}
\label{TabOptP}

\begin{tabular}{|r|r|r|r|r|r|r|r|r|r|r|r|r|r|}
\hline
    \multicolumn{1}{|c|}{\multirow{2}[4]{*}{T}} & \multicolumn{13}{c|}{K,Call} \\
\cline{2-14}          & 21    & 21.5  & 22    & 22.5  & 23    & 23.5  & 24    & 25    & 26    & 27    & 28    & 29    & 30 \\
\hline
    4/4/2014 &       & 16.60 & 14.69 & 14.40 & 14.86 &       &       &       &       &       &       &       &  \\
\hline
    4/11/2014 &       & 16.89 & 14.96 & 14.52 & 14.77 & 14.98 &       &       &       &       &       &       &  \\
\hline
    4/19/2014 &       &       & 15.79 &       & 13.38 &       & 15.39 &       &       &       &       &       &  \\
\hline
    5/17/2014 & 16.71 &       & 14.48 &       &       &       & 13.75 &       &       &       &       &       &  \\
\hline
    6/21/2014 & 16.31 &       & 14.78 &       &       &       & 13.92 & 14.28 & 16.58 &       &       &       &  \\
\hline
    7/19/2014 & 16.82 &       & 15.24 &       &       &       & 14.36 & 14.19 & 15.20 &       &       &       &  \\
\hline
    12/20/2014 & 17.63 &       & 16.61 &       &       &       & 15.86 & 15.47 & 15.12 & 15.18 & 15.03 &       &  \\
\hline
    1/17/2015 &       &       &       &       & 16.95 &       & 17.25 & 16.23 & 15.73 & 15.50 & 15.58 & 15.86 & 16.47 \\
\hline
\end{tabular}%

\end{center}
\end{sidewaystable*}

We have already mentioned that in our model for each term the slopes of the smile at plus and minus infinity, $v^1_{j,n_j}$ and $v^1_{j,0}$, are free parameters. So often traders have an intuition about these values. However, in our numerical experiments we take for them just some plausible values. In more detail, for a normalized variance $v(x)$ defined in \eqref{finODE}, for all smiles we use $v^1_{j,0} = -0.1$, and $v^1_{j,n_j} = 0.1$. Accordingly, for the instantaneous variance $\sigma^2(x) = 2 S^2 v(x)/p_j$ the slopes at both zero and plus infinity are time-dependent and can be computed by using the above formula.

When calibrating the model to market data, we use the standard Matlab {\it fsolve} function, and utilize a "trust-region-dogleg" algorithm (see Matlab documentation on {\it fsolve}). Parameter "TypicalX" has to be chosen carefully to speedup calculations.

The results of this calibration which is done term-by term, are given in Fig.~\ref{FigPut}. Here each subplot corresponds to a single maturity $T$ (marked in the legend) and shows market data (discrete points) and computed values (solid line). It can be seen that this simple local calibration algorithm provides a very accurate fit for all terms\footnote{Note, that in \cite{ItkinLipton2017} in the last subplot the fit is not perfect in the vicinity of $X = -0.5$, where $X = \log K/F$ and $F = S e^{(r-q)T}$.}.
\begin{figure}[!htb]
\begin{center}
\fbox{\includegraphics[width = 0.8\textwidth]{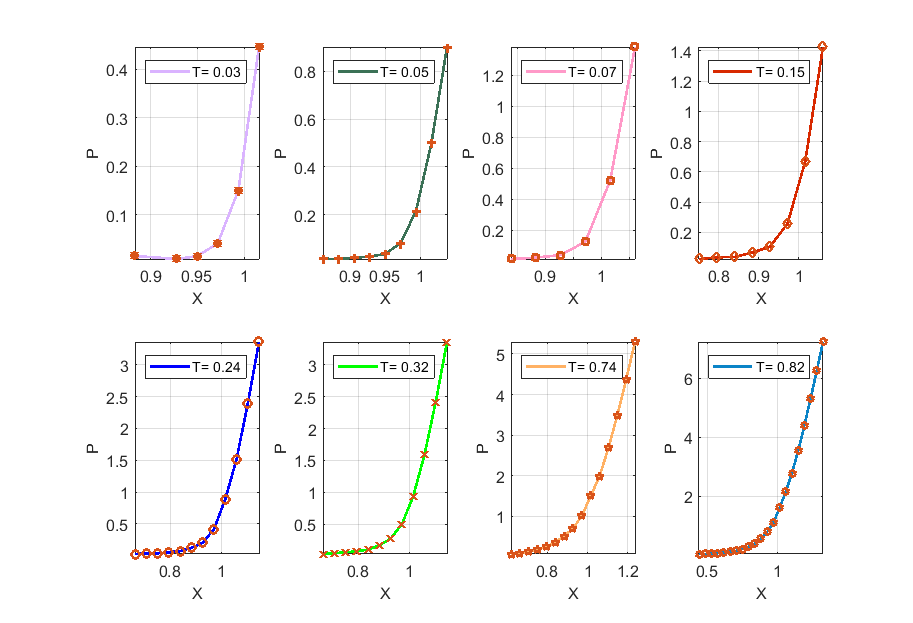}}
\caption{Term-by-term fitting of market Put prices constructed using the whole set of data in Tab.~\ref{TabOptC},\ref{TabOptP}.}
\label{FigPut}
\end{center}
\end{figure}

We constructed the calibration algorithm to be smart enough in a sense that based on the values of parameters at each iteration it decides itself which particular solution (full or asymptotic) should be used at this iteration. We also observed that all full and asymptotic solutions are utilized by the algorithm when calibrating these market smiles.

Table~\ref{Perf} presents some performance measures of our algorithm. It can be seen that the elapsed time depends on the number of iterations and function evaluations necessary to converge to the given tolerance (we use a relative tolerance $\varepsilon = 10^{-4}$). This, in turn, depends on the number of evaluated Kummer functions (for the full solution), or number of exponential and Gamma functions (for the asymptotic solutions). Of course, the asymptotic solutions are much faster to evaluate, therefore an average time to calibrate a typical term is less than a second. For the last term 8 in Tab.~\ref{Perf} calibration is slow for two reasons: i) full solution is used based on the values of parameters, and 2) the number of strikes is higher than for the other terms. But the main reason is that the market data for this term is quite irregular. In any case, performance of this model is much better than that reported in both \cite{ItkinSigmoid2015} and \cite{ItkinLipton2017}.
\begin{table}[H]
\begin{center}
\small
\begin{tabular}{|r|r|r|r|r|r|}
\hline
Term & $T$, years & Elapsed time, sec & iterations & function evaluations & strikes \\
\hline
1 & 0.0274 & 0.86 & 97 & 1202 & 6 \\
\hline
2 & 0.0466 & 2.83 & 97 & 1808 & 9 \\
\hline
3 & 0.0685 & 1.43 & 95 & 1200 & 6 \\
\hline
4 & 0.1452 & 0.64 & 48 & 433 & 8 \\
\hline
5 & 0.2411 & 0.90 & 37 & 470 & 12 \\
\hline
6 & 0.3178 & 2.98 & 82 & 1523 & 12 \\
\hline
7 & 0.7397 & 6.60 & 106 & 3017 & 15 \\
\hline
8 & 0.8164 & 149.67 & 56 & 1317 & 21 \\
\hline
\end{tabular}
\caption{Performance characteristics of the algorithm in the described experiment.}
\label{Perf}
\end{center}
\end{table}

The local variance curves obtained as a result of this fitting are given term-by-term in Fig.~\ref{termXLF}. The corresponding local variance surface is represented in Fig.~\ref{lvXLF}

\begin{figure}[!htb]
\begin{center}
\fbox{\includegraphics[width=0.7\textwidth]{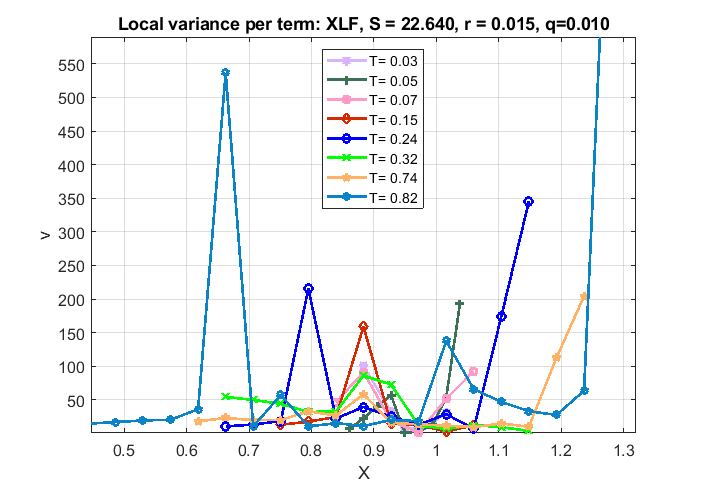}}
\caption{Term-by-term fitting of the instantaneous local variance $\sigma^2(x,T)$.}
\label{termXLF}
\end{center}
\end{figure}

\begin{figure}[!htb]
\begin{center}
\fbox{\includegraphics[width=0.7\textwidth, height=3in]{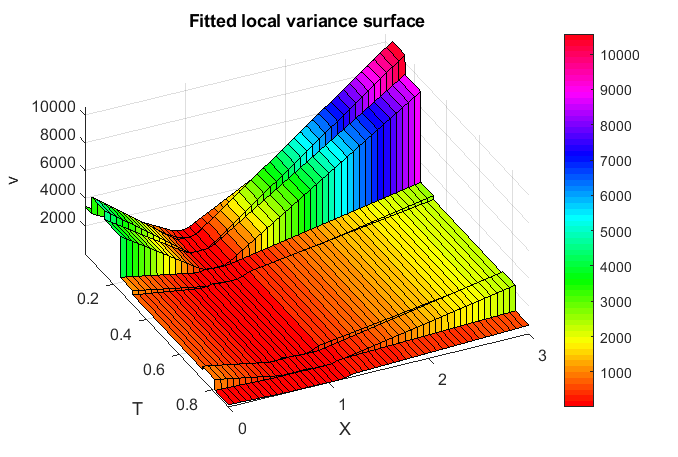}}
\caption{The instantaneous local variance surface $\sigma^2(x,T)$ constructed by using the proposed approach.}
\label{lvXLF}
\end{center}
\end{figure}

By comparing the surface with that given in \cite{ItkinLipton2017}, one can notice that the shape is quite different while for calibration we use the same market smiles. This is because  in \cite{ItkinLipton2017} the standard local volatility model is used, where the underlying price follows a Geometric Brownian motion equipped with an instantaneous local volatility function, while in this paper the model is quite different.

To look at a more regular surface, we proceed with another example which is taken from \cite{Balaraman2016}. In that paper an implied volatility surface of S\&P500 is presented, and the local volatility surface is constructed using the Dupire formula. In our test we take data for the first 12 maturities and all strikes as they are given in \cite{Balaraman2016}, and apply our model to calibrate the local variance surface as this is described in above. When doing so we set $v^1_{j,0} = -0.3$, and $v^1_{j,n_j} = 0.1$ for all smiles.

The results of this calibration are presented in Fig.~\ref{fitSP},\ref{termSP},\ref{lvSP}. By construction, our surface preserves no-arbitrage, while for the approach in \cite{Balaraman2016} they have to solve some additional problems\footnote{As this is mentioned in \cite{Balaraman2016}, the correct pricing of local volatility surface requires an arbitrage free implied volatility surface. If the input implied volatility surface is not arbitrage free, this can lead to negative transition probabilities and/or negative local volatilities and can give rise to mispricing.}.

\begin{figure}[!htb]
\begin{center}
\fbox{\includegraphics[width = 0.8\textwidth]{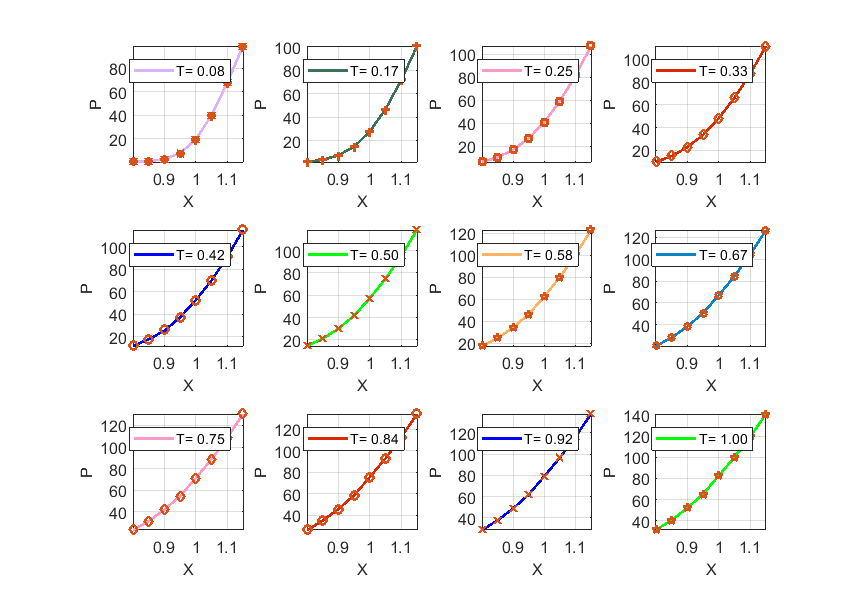}}
\caption{Term-by-term fitting of market S\&P500 Put prices constructed using data of \cite{Balaraman2016}.}
\label{fitSP}
\end{center}
\end{figure}

\begin{figure}[!htb]
\begin{center}
\fbox{\includegraphics[width=0.7\textwidth]{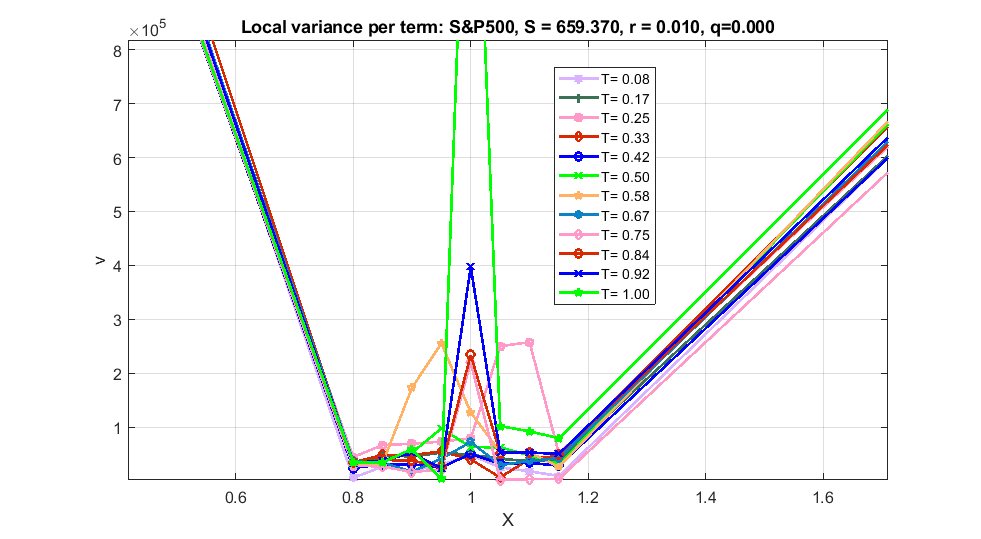}}
\caption{Term-by-term fitting of the instantaneous local variance $\sigma^2(x,T)$ for S\&P500.}
\label{termSP}
\end{center}
\end{figure}

\begin{figure}[!htb]
\begin{center}
\fbox{\includegraphics[width=0.7\textwidth, height=3in]{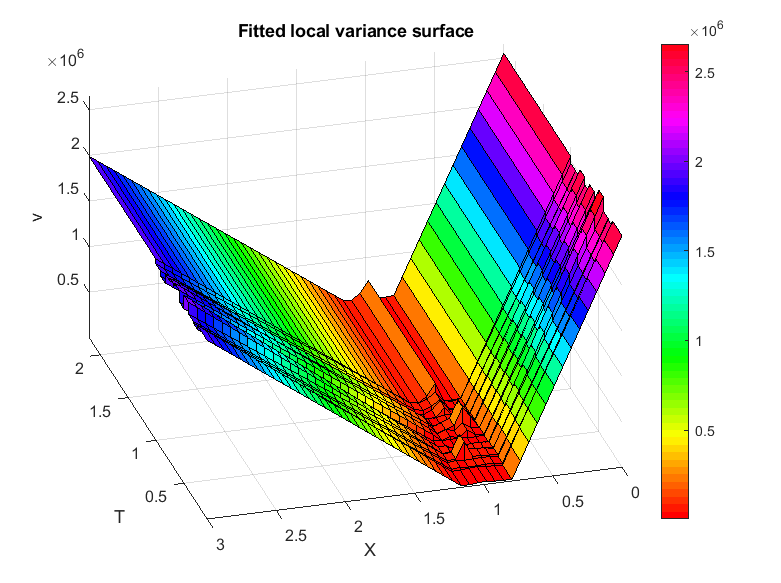}}
\caption{The instantaneous local variance surface $\sigma^2(x,T)$ for S\&P500 constructed by using the proposed approach.}
\label{lvSP}
\end{center}
\end{figure}

In Table~\ref{PerfSP500} we present the performance of our algorithm in this experiment.
It can be seen that here the elapsed time is similar or shorter as compared with the previous test presented in Table~\ref{Perf}.

\begin{table}[H]
\begin{center}
\small
\begin{tabular}{|r|r|r|r|r|r|}
\hline
Term & $T$, years & Elapsed time, sec & iterations & function evaluations & strikes \\
\hline
    1     & 0.0822 & 1.09  & 99    & 1604  & 8 \\ \hline
    2     & 0.1671 & 0.56  & 40    & 377   & 8 \\ \hline
    3     & 0.2521 & 2.32  & 94    & 1615  & 8 \\ \hline
    4     & 0.3315 & 1.70  & 97    & 1186  & 8 \\ \hline
    5     & 0.4164 & 0.10  & 15    & 64    & 8 \\ \hline
    6     & 0.4986 & 2.35  & 111   & 1600  & 8 \\ \hline
    7     & 0.5836 & 2.40  & 111   & 1584  & 8 \\ \hline
    8     & 0.6658 & 2.25  & 131   & 1604  & 8 \\ \hline
    9     & 0.7507 & 1.51  & 95    & 1072  & 8 \\ \hline
    10    & 0.8356 & 2.30  & 98    & 1603  & 8 \\ \hline
    11    & 0.9178 & 0.07  & 13    & 46    & 8 \\ \hline
    12    & 1.0027 & 72.80 & 74    & 795   & 8 \\ \hline
\end{tabular}
\caption{Performance characteristics of the algorithm for calibration of a S\&P500 surface.}
\label{PerfSP500}
\end{center}
\end{table}

\clearpage
\section{Conclusions}

In this paper we propose an expanded version of the Local Variance Gamma model of
\cite{CarrNadtochiy2017} which we refer as an Expanded Local Variance Gamma model, or ELVG. Two main improvements are introduced as compared with the LVG model. First, we add drift to the governing underlying process. It turns out that this a relatively minor (at the first glance) improvement requires a interesting trick to preserve tractability of the model, which is a non-trivial time-change. We show that still in this new model it is possible to derive an ordinary differential equation for the option price which plays a role of Dupire's equation for the standard local volatility model.

The second novelty of the paper as compared with the LVG model is that in contrast to \cite{CarrNadtochiy2017} we consider a local variance to be a piecewise linear function of strike, while in \cite{CarrNadtochiy2017} it was piecewise constant. We proceed in the spirit of \cite{ItkinLipton2017} by describing a no-arbitrage interpolation, and then construct a closed-form solution of our ODE in terms of hypergeometric and generalized hypergeometric functions. An important advantage of this approach is that calibration of the model to market smiles does not require solving any optimization problem, and can be done term-by-term by solving a system of non-linear algebraic equations for each maturity, which, in general, is significantly faster, especially since we provide an algorithm for constructing a smart initial guess. We also provide various asymptotic solutions which allow a significant acceleration of the numerical solution and improvement of its accuracy in the corresponding cases (i.e, when parameters of the model at some iteration obey the conditions to apply the corresponding asymptotic).

In principle, somebody could claim that solving a system of nonlinear equations with a generic solver is not much different from solving a nonlinear optimization problem. Obviously, when our ODE is used as an alternative to the Dupire equation, the difference comes from the fact that calibration based on the Dupire equation requires solving this PDE at every iteration by either numerically, or semi-analytically by using a Laplace transform, which is obviously slower. As was mentioned in Introduction there exist many other calibration algorithms which reduce to a nonlinear optimization problem (e.g., taking a sufficiently large parametric family of local volatility functions and choosing the parameters that provide the best fit of observed prices). For the latter computation of the objective function is fast, but optimization must be constrained to preserve no-arbitrage, and, thus, slow.

In our numerical test we use same market data as in \cite{ItkinSigmoid2015, ItkinLipton2017}. The results of the test demonstrate robustness of the proposed approach from both the speed and accuracy point of view, especially in cases where the above referred papers experienced some difficulties with achieving a perfect fit. An additional test performed for the S\&P500 data taken from \cite{Balaraman2016} gives rise to the same conclusion.

\clearpage

\section*{References}

\newcommand{\noopsort}[1]{} \newcommand{\printfirst}[2]{#1}
  \newcommand{\singleletter}[1]{#1} \newcommand{\switchargs}[2]{#2#1}

\appendix

\end{document}